\newcolumntype{.}{D{.}{.}{-1}}%
\newcolumntype{Z}{D{.}{.}{2}}%
\newcolumntype{T}{D{.}{.}{3}}
\newcolumntype{V}{D{.}{.}{4}}%
\theoremstyle{plain}
\newtheorem{theorem}{Theorem}%
\newtheorem{lemma}[theorem]{Lemma}%
\theoremstyle{definition}
\newcommand{\los}{\mathrm{L}}%
\newcommand{\losd}{\mathrm{L}_{\mathrm{Diff}}}%
\renewcommand\theta{\vartheta}
\newcommand{\E}{\ensuremath{\mathbb{E}}}%
\newcommand{\Pb}{\ensuremath{\mathbb{P}}}%
\newcommand{\R}{\ensuremath{\mathbb{R}}}%
\newcommand{\Aa}{\ensuremath{\mathcal{A}}}%
\newcommand{\F}{\ensuremath{\mathcal{F}}}%
\DeclareMathOperator{\Var}{Var}
\DeclareMathOperator{\Cov}{Cov}
\renewcommand\theta{\vartheta}
\begin{document}
%
%
\LARGE
%
{\title{Using Proxies to Improve Forecast Evaluation}
	\author{Hajo Holzmann\\
		\small{Fachbereich Mathematik und Informatik}  \\
		\small{Philipps-Universität Marburg} \\
		\small{holzmann@mathematik.uni-marburg.de}
		\and
		Bernhard Klar\footnote{Corresponding author. Bernhard Klar, Karlsruher Institut für Technologie, Institut für Stochastik, Englerstr. 2, 76131 Karlsruhe,  Germany} \\
		\\
		\small{Institut für Stochastik}  \\
		\small{Karlsruher Institut für Technologie (KIT)} \\
		\small{bernhard.klar@kit.edu} }
	}
	\maketitle
	%

%
%
%
\normalsize
\begin{abstract}
Comparative evaluation of forecasts of statistical functionals relies on comparing averaged losses of competing forecasts after the realization of the quantity $Y$, on which the functional is based, has been observed. Motivated by high-frequency finance, in this paper we investigate how proxies $\tilde Y$ for $Y$ - say volatility proxies - which are observed together with $Y$ can be utilized to improve forecast comparisons.
We extend previous results on robustness of loss functions for the mean to general moments and ratios of moments, and show in terms of the variance of differences of losses that using proxies will increase the power in comparative forecast tests. These results apply both to testing conditional as well as unconditional dominance. Finally, we numerically illustrate the theoretical results, both for simulated high-frequency data as well as for high-frequency log returns of several cryptocurrencies.
\end{abstract}
%

{\sl Key words: }  Cryptocurrencies, Diebold-Mariano test, forecast comparison, volatility proxies

\bigskip

%
%
%
\section{Introduction}

Comparative evaluation of forecasts of statistical functionals is a standard issue in the realm of forecasting \citep{gneiting2011making}. It relies on comparing expected or averaged losses of competing forecasts,  
after the realization of the quantity $Y$, on which the functional is based, has been observed. The aim of this paper is to investigate how proxies $\tilde Y$ for $Y$, which are observed together with $Y$, can be utilized to improve forecast comparisons in the sense that they result in the same ordering but bring an increase in the power of comparative forecast tests.

The motivation comes mainly from high-frequency finance, where high-frequency data are routinely used to \textit{generate} forecasts - say of volatilities - also over moderate time horizons such as daily volatilities \citep{corsi2009simple}.
Our investigation shows how high-frequency data can be used to obtain \textit{sharper forecast evaluation}. In comparative forecast evaluation, this would mean that when comparing two forecasts of daily volatilities in terms of expected values of loss functions, the better forecast can be determined with higher power when using these high-frequency data in the process of forecast evaluation.

Our main point of departure was \cite{patton2011volatility}, who showed that using various noisy volatility proxies - e.g.~based on high frequency - is \textit{valid} in comparative forecast evaluation, that is, preserves the order of the expected losses. \cite{hansen2006consistent} have similar results, while \citet{laurent2013loss} provide a multivariate generalization of the characterization in \cite{patton2011volatility}, and \citet{koopman2005forecasting} illustrate the use of realized measures for forecast comparisons on various observed  high frequency data sets.
 We are interested in the comparison of different possibly misspecified forecasts, in which situation \citet{patton2020comparing} shows that the ranking of the forecasts may depend on the loss function. 

\smallskip

A very recent, related contribution is by \citet{hoga2022testing}, who focus on predicting the mean, and illustrate their methods for GDP forecasts.
Our contributions and their relation to the literature can be summarized as follows.
\begin{enumerate}
	\item We extend the analysis from \cite{patton2011volatility} and \cite{hansen2006consistent} about the validity when using various proxies from volatilities, that is, second moments, to general moments and beyond to ratios of moments. We use a concept corresponding to the notion of exact robustness from \citet{hoga2022testing}, who also assume that the proxy enters the loss difference in the same way as the original observation, and in this setting show that only the mean allows for exactly robust loss functions.
	\item We formally show in terms of the variance of differences of losses that using proxies will increase the power in comparative forecast testing by decreasing the variance of loss differences,  both when testing conditional as well as unconditional dominance, see \citet{nolde2017elicitability}) for these notions. \citet{hoga2022testing} have similar results for the mean and focus on conditional dominance testing.
	\item Finally, we illustrate the theoretical results for high-frequency data, both simulated as well as related to three cryptocurrencies, using the three-zone approach from \citet{fissler2016expected} and 	\citet{nolde2017elicitability}.
We show that the choice of the proxies, as well as the choice of the loss function, has a pronounced effect on the comparative evaluation of forecasts: using high-frequency data and the QLIKE loss substantially improves the forecast evaluation.
\end{enumerate}

The paper is structured as follows. In Section \ref{sec:constscore}, we start with a motivating example; we recall strictly consistent loss functions for statistical functionals, and introduce a dynamic framework for forecast evaluation. Section \ref{sec:moments} investigates the use of proxies to improve evaluations of forecasts of moments, and in Section \ref{sec:ratiomoments} this is further discussed and extended to ratios of moments. Section \ref{sec-sims} summarizes the results of a simulation study, where we consider comparing forecasts for second, third, and forth moments for GARCH-type time series. In Section \ref{sec-data}, we provide an illustration of our methods to predicting the volatility of log-returns of three cryptocurrencies, while a supplement contains additional numerical results.

%
%

\section{Motivation and Basic Concepts}\label{sec:constscore}
\subsection{Motivating examples} \label{sec-examples}
Let us first illustrate the use of different proxies and loss functions in Diebold-Mariano (DM) tests for equal forecast performance.
Consider the following stylized scenario, where the aim is to distinguish between two competing forecasts. Observations correspond to logarithmic returns, and we assume that the true data generating process is a simple GARCH(1,1) model. The total length of the time series is $T=1500$, and we consider rolling one-step-ahead forecasts of the conditional variance using a moving time window, with window length T/3, resulting in $n=1000$ forecasts.

There are four forecasters. The first one is lucky to use a GARCH(1,1) model for making predictions, forecasters 2, 3, and 4 use ARCH(1), ARCH(2), and ARCH(7) models, respectively.
Clearly, we expect that the predictions from forecaster 1 outperform, in some sense, the other predictions. Moreover, we would expect that the ARCH(7) model beats the ARCH(1) and ARCH(2) models since the former should be a better approximation to a GARCH(1,1) process. Let the forecasts of
the conditional variance of logarithmic returns $r_t$ for any pair of forecasters be denoted by $x_{1,t}$ and $x_{2,t}$. Then, our interest focuses on the null hypotheses
\begin{align*}
H_0: \quad \text{Forecast $x_{1,t}$ predicts at least as well as forecast $x_{2,t}$},
\end{align*}
and if $H_0$ is rejected, then $x_{2,t}$ is worse than $x_{1,t}$.
To decide for or against $H_0$, we use a DM test based on the loss differences
\begin{align*}
\Delta_n\bar{L} &= 1/n \sum_{t=1}^n L(x_{1,t},y_{t+1})-L(x_{2,t},y_{t+1}),
\end{align*}
where $L(x,y)$ is some loss function, and $y_{t+1}$ materializes at day $t+1$.
Under suitable conditions, the studentized test statistic $S$ has a limiting standard normal distribution, and $H_0$ is rejected for large values of $S$.

\smallskip
To evaluate the forecasts, the mean squared error loss $L(x,y)=(x-y)^2$, is used, together with the squared returns $\tilde{y}_t=r_t^2$ as an unbiased proxy for the true conditional variance of logarithmic returns.
The first line of the following table shows the values of the test statistic $S$ if the different predictions are compared to the GARCH(1,1) model. Even if the values are positive (hence, slightly favor the GARCH model), they are nowhere near statistically significant. The comparison of the ARCH(7) model with the other two ARCH models even results in values close to zero.

\begin{center}
\begin{tabular}{c|cccc}
\diagbox{$x_{2,t}$}{$x_{1,t}$} & GARCH(1,1) & ARCH(1) & ARCH(2) & ARCH(7)\\ \hline
 GARCH(1,1) &    -   &  0.788 & 1.010 & 0.984 \\
 ARCH(7)    & -0.984 & -0.193 & 0.152 & -
\end{tabular}
\end{center}

Next, assume that, besides the daily returns, also 5-min returns are available for predicting the next day's volatility. Thus, the squared returns are replaced by the realized variances $\tilde{y}_t= \sum_{i=1}^m r_{t,i}^2$, where $r_{t,i}$ are the intraday log returns. The outcomes of the DM tests for equal predictive performance, now using the realized variances as proxies, are as follows:

\begin{center}
\begin{tabular}{c|cccc}
\diagbox{$x_{2,t}$}{$x_{1,t}$} & GARCH(1,1) & ARCH(1) & ARCH(2) & ARCH(7)\\ \hline
 GARCH(1,1) &    -   & 3.976 & 3.924 & 3.506 \\
 ARCH(7)    & -3.506 & 2.474 & 2.352 &   -
\end{tabular}
\end{center}

In comparison with the first table, the values in the first line are much larger, being statistically significant even on the 0.01-level, and indicating the dominance of the prediction under the GARCH(1,1) model. The comparison of the ARCH(7) model with the other two ARCH models favors the ARCH(7) model, at least on the 0.05-level.

Finally, the evaluator decides to replace the MSE with the QLIKE loss function $\tilde{L}(x,y)=y/x-\log(y/x)-1$, and gets the following results of the corresponding DM tests.

\begin{center}
\begin{tabular}{c|cccc}
\diagbox{$x_{2,t}$}{$x_{1,t}$} & GARCH(1,1) & ARCH(1) & ARCH(2) & ARCH(7)\\ \hline
 GARCH(1,1) &  -  & 5.589 & 5.523 & 4.386 \\
 ARCH(7)    & -4.386 & 3.642 & 3.492 & -
\end{tabular}
\end{center}

Now, all entries are even larger in absolute terms, and the ARCH(7) model dominates the competing  ARCH models even on the 0.01-level.

Clearly, these results are based on a specific realization of the time series. However, a closer look at this example in Section \ref{sec-sims} reveals that this behavior is rather typical.

\smallskip
As a real data example, we consider log returns of the cryptocurrency Bitcoin (BTC). We use hourly observations from May 16, 2018 to October 27, 2021, with sample size 30264, which corresponds to 1261 days. All prices are closing values in U.S. dollars. Returns are estimated by taking logarithmic differences.
Figure \ref{fig:DM-BTC} shows the results of DM tests for the log returns for different competing models, where the results are depicted using the three-zone approach of \cite{fissler2016expected}. A result marked in green indicates that the model in the column outperforms the model in the row, a red mark indicates inferiority. The marking is yellow if there is no significant difference between the two models. Light green, green and dark green indicates significance at level 0.1, 0.05 and 0.01, respectively.
For example, in the upper left panel of Figure \ref{fig:DM-BTC}, the CGARCH model outperforms the ARCH(1) model, but none of the other models. Whereas there are only slight differences between the use of squared returns compared to high frequency data as proxies in case of MSE,
the power of the DM tests is considerably higher by using the high frequency proxy compared to squared returns with QLIKE loss. Moreover, there is an increase in power using the QLIKE loss function compared to MSE.
For more details, see Sections \ref{sec-DM} and \ref{sec-data}.

\begin{figure}
\centering
\includegraphics[width=0.9\textwidth,height=0.4\textheight]{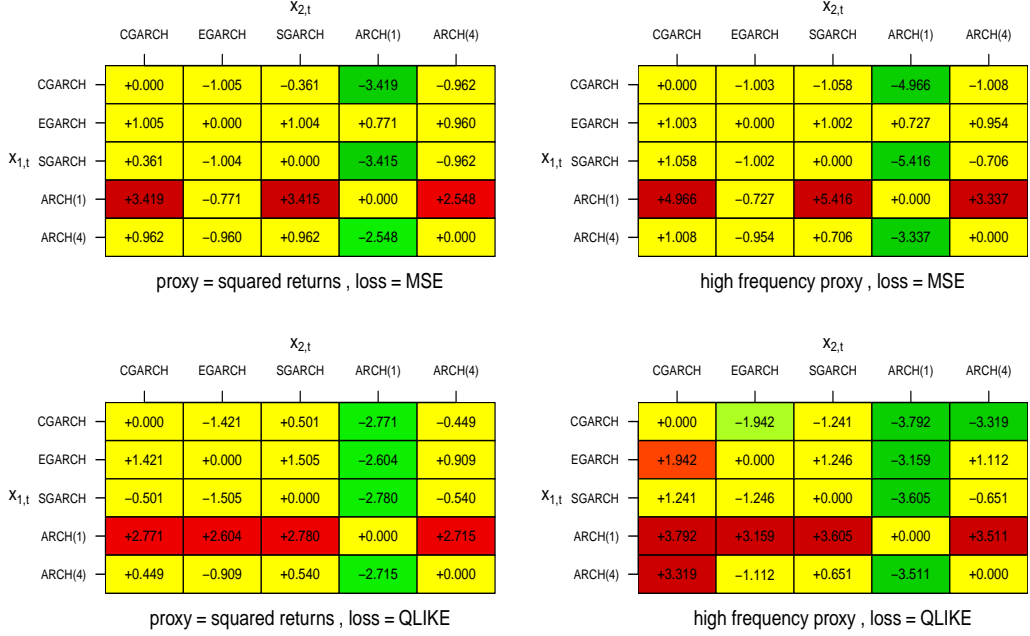}
\caption{Results of DM tests for BTC log returns from May\,16,\,2018 to October\,27,\,2021, left: squared returns, right: high frequency proxy.}
\label{fig:DM-BTC}
\end{figure}

\subsection{Loss functions and statistical functionals}
We start by recalling the concept of strictly consistent loss (or scoring) functions, see Gneiting (2011). Let
$\Theta$ be a class of distribution functions on a closed subset $D \subset \R$, which we identify with their associated probability distributions, and let
$T: \Theta \to \R$ be a (one-dimensional) \textit{statistical functional (or parameter)}.
%

A {\sl loss function (also scoring function)} is a measurable map
$\los: \R \times D \to [0, \infty)$.
It is interpreted as the loss if forecast $x$ is issued and $y$ materializes.
$\los$ is consistent for the functional $T$ relative to the class $\Theta$, if
\begin{equation}\label{eq:strctconsistent}
	\text{for all } x \in \R,\ F \in \Theta:\quad \E_F\big[\los(T(F),Y) \big] \leq \E_F \big[\los(x,Y )\big],
\end{equation}
and $\E_F$ indicates  that expectation is taken under the distribution $F$ for $Y$, and we assume that the relevant expected values are finite. Thus, the true functional $T(F)$ minimizes the expected loss under $F$.
If, in addition,
\[  \E_F\big[\los(T(F),Y) \big] = \E_F \big[\los(x,Y )\big] \quad \text{implies that} \quad  x = T(F),\]
then $\los$ is {\sl strictly consistent} for $T$. The functional $T$ is called {\sl elicitable} relative to the class $\Theta$ if it admits a strictly consistent loss function.
For several functionals such as moments, quantiles, and expectiles, \citet{gneiting2011making} characterizes all strictly consistent loss functions under some smoothness and normalization conditions.
%
See also \citet{steinwart2014elicitation}. 

When comparing two forecasts $x,x' \in \R$ for a given $F \in \Theta$ and hence parameter $T(F)$, we say that \textit{$x$ dominates $x'$ under $F$ for the loss $\los$} if the \textit{ difference of expected losses}
\begin{equation}\label{eq:scorediff}
	\E_F\big[\los(x,Y) \big] - \E_F \big[\los(x',Y )\big]< 0.
\end{equation}
From \eqref{eq:strctconsistent}, for a strictly consistent loss function, the true parameter $T(F)$ dominates any other forecast.

\subsection{Dynamic forcasting and comparative forecast evaluation}\label{sec:dynamic}

Now let us consider a forecasting situation. Forecasts are issued on the basis of certain information. Let $(\Omega, \Aa, \Pb)$ be a probability space, let $\F_t \subset \Aa$ be a sub-$\sigma$-algebra of $\Aa$, the information set at time $t$, on the basis of which the forecast is issued.
In finance, $\F_t$ can include returns (including high-frequency) up to time $t$ as well other covariates observed up to time $t$.

\smallskip

The aim is to predict the functional $T$, say the mean or the volatility, of the random variable $Y_{t+1}: \Omega \to \R$, which will be observed at time $t+1$ (say one day ahead). For example, this may be the return from $t$ to $t+1$ over one day.
More precisely, if
$ F_{ Y_{t+1}|\F_t}(\omega, \cdot)$
denotes the conditional distribution of $Y_{t+1}$ given $\F_t$, then the parameter of interest is
\[  T\big(Y_{t+1}|\F_t\big)(\omega) : = T\big( F_{ Y_{t+1}|\F_t}(\omega, \cdot)\big)\]
We note that
\begin{itemize}
	\item the forecast is based on the full information up to time $t$. Thus, even if $Y_{t+1}$ is a return over one day, for the forecast we use e.g.~high-frequency data up to time $t$ if these are included in $\F_t$,
	\item compared to such additional data, the observation $Y_{t+1}$ is of particular relevance since  the parameter  $T(Y_{t+1}|\F_t)$ is defined via its conditional distribution $F_{ Y_{t+1}|\F_t}(\omega, \cdot)$ given $\F_t$.
\end{itemize}
Thus, to generate the forecast, even if the time horizon for the forecast is, say, one day, it is standard to use high-frequency information contained in $\F_t$ up to time $t$.
Now, the issue in this paper is how to use additional information contained in $\F_{t+1}$, available at time $t+1$, to improve comparative forecast evaluation.

\smallskip

First let us recall the setting for comparative forecast evaluation based on $Y_{t+1}$.
A forecast at time $t$ is - in great generality - an $\F_t$-measurable random variable $Z_t$. Now, if $\los$ is a strictly consistent loss function for $T$, then compared to the true forecast $T(Y_{t+1}|\F_t)$, we have the following results \citep{holzmann2014role}:
\begin{equation}\label{eq:ineqfirst}
	\E\, \big[\los\big(T(Y_{t+1}|\F_t), Y_{t+1}\big) | \F_t\big](\omega) \leq \E\, \big[\los(Z_t, Y_{t+1}) | \F_t\big](\omega) \qquad \text{ for }\Pb-a.e.\ \omega \in \Omega,
\end{equation}
that is the conditional dominance of the true forecast $T(Y_{t+1}|\F_t)$ over any other generic forecast $Z_t$, and
\begin{equation}\label{eq:meanloss}
	\E\, \big[\los\big(T(Y_{t+1}|\F_t), Y_{t+1}\big)\big] \leq \E\, \big[\los(Z_t, Y_{t+1})\big],
\end{equation}
the unconditional dominance of $T(Y_{t+1}|\F_t)$ over $Z_t$. When comparing with the true forecast $T(Y_{t+1}|\F_t)$ (but not in general), these two concepts coincide: We have equality in (\ref{eq:ineqfirst}) or (\ref{eq:meanloss}) if and only if $T(Y_{t+1}|\F_t) = Z_t$ $\Pb$-almost surely.
Note that $Y_{t+1}$ is used in the comparisons \eqref{eq:ineqfirst} and \eqref{eq:meanloss} by default.

\smallskip
\begin{sloppypar}
We shall generally compare two potentially misspecified forecasts, that is general $\F_t$-measurable random variables $Z_t$  and $Z_t'$ none of which needs to coincide with $T(Y_{t+1}|\F_t)$. Then, by definition, $Z_t'$ \textit{conditionally dominates} $Z_t$ for the loss function $\los$ if
\end{sloppypar}
\begin{equation}\label{eq:ineqfirst1}
	\E\, \big[\los(Z_t', Y_{t+1}) | \F_t\big](\omega) \leq \E\, \big[\los(Z_t, Y_{t+1}) | \F_t\big](\omega) \qquad \text{ for }\Pb-a.e.\ \omega \in \Omega,
\end{equation}
with strict inequality on a set of positive probability, while $Z_t'$ \textit{unconditionally dominates} $Z_t$ for the loss function $\los$ if
\begin{equation}\label{eq:meanloss1}
	\E\, \big[\los(Z_t', Y_{t+1})\big] < \E\, \big[\los(Z_t, Y_{t+1})\big].
\end{equation}
In this more general setting, conditional dominance still implies unconditional dominance, but the converse is not true in general.

Now, we consider how additional information contained in $\F_{t+1}$ (apart from $Y_{t+1}$) may  be used for \textit{forecast evaluation}.
In the context of high frequency financial data, apart from using the high-frequency data to \textit{generate} forecasts over daily time horizons,  we shall investigate these high-frequency data to obtain \textit{sharper forecast evaluation}.

\section{Proxies when Comparing Forecasts of Moments}\label{sec:moments}

Suppose that $D = I$ is an interval, $h: I \to \R$ is a measurable function such that $\E_F[|h(Y)|] < \infty$ for all $F \in \Theta$.  Then a classical result by \citet{savage1971elicitation}, see also \citet{gneiting2011making}, characterizes the strictly consistent scoring functions for $T(F) = \E_F[h(Y)]$ in the form
\begin{equation}\label{eq:scoreingfct}
	\los(x,y) = \phi(y) - \phi(x) - \phi'(x)\, (h(y)-x), \qquad y,x \in I,
\end{equation}
where $\phi: I \to \R$ is a strictly convex function with subgradient $\phi'$  for which $\E_F[|\phi(X)|] < \infty$ for all $F \in \Theta$, and $\phi'$ denotes the derivative of $\phi$.

The functional $T(F) = \E_F[h(Y)]$ is called the \textit{$h$ - moment} of $Y$, or  \textit{generalized moment} or simply a moment of $Y$. The classical moments arise for $h(x) = x^n$ for integers $n$. In our applications, we shall focus on the cases $n =2,3$ and $n=4$.

First, we formulate the following lemma in the static framework.
\begin{lemma}\label{lem:momentdiff}
	Consider \eqref{eq:scoreingfct} and forecasts $x_1,x_2 \in \R$.
	\begin{enumerate}
		\item 	The loss difference,
		\begin{align}
			\los(x_1,y) - \los(x_2,y) & = \phi(x_2)\, (1-x_2)-\phi(x_1)\, (1-x_1)  + \big(\phi'(x_2) - \phi'(x_1)\big)\, h(y) \label{eq:lossdiffmoment} \\
			& = : \losd\big(x_1,x_2,h(y)\big) \nonumber
		\end{align}
		depends on $y$ only through $h(y)$.
		\item If $F \in \Theta$ and $\tilde Y$ is a random variable (with a given distribution) such that $\E_F[h(Y)] =\E[ \tilde Y] $ (the moment is the same), then
		\begin{equation}\label{eq:equalscorediffmoment}
			\E_F\big[\losd\big(x_1,x_2,h(Y)\big) \big] = \E\big[\losd\big(x_1,x_2,\tilde Y\big) \big].
		\end{equation}
		\item We have that
		\begin{equation}
			\Var_F\big(\losd\big(x_1,x_2,h(Y)\big) = \big(\phi'(x_2) - \phi'(x_1)\big)^2\, \Var_F\big(h(Y)\big).
		\end{equation}
		Consequently if in addition to ii) it holds that $\Var( \tilde Y) \leq \Var_F[h(Y)] $, then we have that
		\begin{equation}\label{eq:vardiffdiffmoment1}
			\Var\big(\losd\big(x_1,x_2,\tilde Y\big) \big) \leq \Var_F\big(\losd\big(x_1,x_2,h(Y)\big) \big) .
		\end{equation}
		
	\end{enumerate}
\end{lemma}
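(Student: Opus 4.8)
The plan is to notice that all three parts rest on one structural fact: by \eqref{eq:scoreingfct} the loss difference is an \emph{affine} function of its last argument. For part (i) I would substitute $\los(x_1,y)$ and $\los(x_2,y)$ from \eqref{eq:scoreingfct} and subtract. The term $\phi(y)$, which is the only genuinely nonlinear dependence on $y$, occurs identically in both losses and cancels. Collecting what remains leaves a $y$-free constant assembled from $\phi(x_j)$, $\phi'(x_j)$ and $x_j$ ($j=1,2$), together with the single term $\big(\phi'(x_2)-\phi'(x_1)\big)\,h(y)$. This is exactly the asserted closed form $\losd\big(x_1,x_2,h(y)\big)$, and it shows transparently that $y$ enters only through $h(y)$.

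For part (ii) I would write the map obtained in (i) as $\losd(x_1,x_2,u)=c_0+c_1\,u$, where $c_0$ is the $y$-free constant and the slope is $c_1=\phi'(x_2)-\phi'(x_1)$; crucially both depend on $x_1,x_2$ but not on the underlying distribution. Taking expectations and using linearity gives $\E_F\big[\losd(x_1,x_2,h(Y))\big]=c_0+c_1\,\E_F[h(Y)]$ and $\E\big[\losd(x_1,x_2,\tilde Y)\big]=c_0+c_1\,\E[\tilde Y]$. The hypothesis $\E_F[h(Y)]=\E[\tilde Y]$ forces the two right-hand sides to agree, which is \eqref{eq:equalscorediffmoment}.

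For part (iii) I would use the same affine representation. Since a constant shift leaves a variance unchanged and a scalar multiple pulls out as its square, $\Var_F\big(\losd(x_1,x_2,h(Y))\big)=c_1^2\,\Var_F(h(Y))=\big(\phi'(x_2)-\phi'(x_1)\big)^2\,\Var_F(h(Y))$, and by the same step $\Var\big(\losd(x_1,x_2,\tilde Y)\big)=\big(\phi'(x_2)-\phi'(x_1)\big)^2\,\Var(\tilde Y)$. Because the common factor $\big(\phi'(x_2)-\phi'(x_1)\big)^2$ is nonnegative, the assumed ordering $\Var(\tilde Y)\le\Var_F[h(Y)]$ carries over to give \eqref{eq:vardiffdiffmoment1}.

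The argument is elementary throughout and there is no real obstacle; the only step requiring care is the bookkeeping in part (i) --- verifying the cancellation of $\phi(y)$ and assembling the constant correctly --- after which parts (ii) and (iii) follow at once from affineness in the last argument, combined with linearity of expectation and the quadratic scaling of variance, respectively.
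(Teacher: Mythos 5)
Your proposal is correct and follows essentially the same route as the paper's own (very terse) proof: insert \eqref{eq:scoreingfct}, observe that $\phi(y)$ cancels so the difference is affine in $h(y)$ with slope $\phi'(x_2)-\phi'(x_1)$, and then apply linearity of expectation for (ii) and the quadratic scaling of variance for (iii). Your careful bookkeeping in part (i) in fact yields the constant term $\phi(x_2)-x_2\phi'(x_2)-\phi(x_1)+x_1\phi'(x_1)$ (as in \eqref{eq:lossdiffmoment1}), which suggests the factorized form $\phi(x_j)(1-x_j)$ printed in \eqref{eq:lossdiffmoment} contains a typo; this does not affect parts (ii) and (iii), which depend only on the slope.
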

Here, $\tilde Y$ plays the role of the proxy that shall be used to improve forecast evaluation. Part (ii) shows that using $\tilde Y$ instead of $h(Y)$ is \textit{valid} if $\E_F[h(Y)] =\E[ \tilde Y] $ in the sense that dominance relations of forecasts are preserved when using $\tilde Y$, while \eqref{eq:vardiffdiffmoment1} shows that evaluation of score differences is actually sharper based on $\tilde Y$ instead of $h(Y)$ if $\Var( \tilde Y) \leq \Var_F[h(Y)] $.

\citet{hoga2022testing} call the equality of loss-differences in \eqref{eq:equalscorediffmoment} exact robustness. When assuming that the proxy $\tilde Y$ enters the loss-difference in the same fashion as $Y$, they show that exact robustness can only hold for strictly consistent scoring functions of the mean. In our more flexible approach, we cover general moments and also ratios of moments, see below.

\begin{proof}
	Part (i) is easily checked by inserting the loss function \eqref{eq:scoreingfct}.
	
	\smallskip
	
	Concerning part (ii),  inserting $\losd$ from \eqref{eq:lossdiffmoment}, we get by assumption
	\begin{align*}
		&	\E_F\big[\losd\big(x_1,x_2,h(Y)\big) \big] - \E\big[\losd\big(x_1,x_2,\tilde Y\big) \big]\\
		= & \big(\phi'(x_2) - \phi'(x_1)\big)\, \Big(\E_F\big[h(Y) \big] -  \E\big[\tilde Y \big]\Big) = 0
	\end{align*}		
	Part (iii) follows similarly easily.
	
\end{proof}

Now, let's turn to the dynamic setting described in Section \ref{sec:dynamic}.

\begin{theorem}[Forecast dominance testing]\label{th:dommoment}
	Consider forecasting the conditional moment $\E\, \big[h(Y_{t+1}) | \F_t\big]$, and suppose that $\tilde Y_{t+1}$ is $\F_{t+1}$-measurable with
	\begin{equation}\label{eq:condmoment}
		\E\, \big[\tilde Y_{t+1} | \F_t\big] = \E\, \big[h(Y_{t+1}) | \F_t\big] \qquad \text{ a.s.}
	\end{equation}
	\begin{enumerate}
		\item 	For the loss difference \eqref{eq:lossdiffmoment}, for any two  forecasts $Z_t$ and $Z_t'$ ($\F_t$-measurable random variables),
		\begin{equation}\label{eq:equalscorediffmoment2}
			\E\big[\losd\big(Z_t,Z_t',h(Y_{t+1})\big) | \F_t\big] = \E\big[\losd\big(Z_t,Z_t', \tilde Y_{t+1}\big) | \F_t\big],
		\end{equation}
		and hence in particular
		\begin{equation}\label{eq:equalscorediffmoment3}
			\E\big[\losd\big(Z_t,Z_t',h(Y_{t+1})\big) \big] = \E\big[\losd\big(Z_t,Z_t', \tilde Y_{t+1}\big) \big].
		\end{equation}
		Thus, both conditional as well as unconditional dominance are preserved when using $\tilde Y_{t+1}$ instead of $h(Y_{t+1})$ in the forecast comparison.
		\item If in addition to \eqref{eq:condmoment} we have that
		\begin{equation*}
			\Var( \tilde Y_{t+1} | \F_t) \leq \Var(h(Y_{t+1}) | \F_t),
		\end{equation*}
		then
		\begin{equation}\label{eq:varscorediffmoment3}
			\Var\big(\losd\big(Z_t,Z_t', \tilde Y_{t+1}\big) | \F_t\big) \leq 	\Var\big(\losd\big(Z_t,Z_t',h(Y_{t+1})\big) | \F_t\big)
		\end{equation}
		as well as
		\begin{equation}\label{eq:varscorediffmoment4}
			\Var\big(\losd\big(Z_t,Z_t', \tilde Y_{t+1}\big) \big) \leq 	\Var\big(\losd\big(Z_t,Z_t',h(Y_{t+1})\big) \big).
		\end{equation}
		
	\end{enumerate}
\end{theorem}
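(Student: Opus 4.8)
The plan is to lift the static computation of Lemma~\ref{lem:momentdiff} to the conditional setting, exploiting two structural features of the loss difference~\eqref{eq:lossdiffmoment}: it is \emph{affine} in its third argument, and both the additive term and the multiplicative factor depend only on the forecasts $x_1,x_2$. Since $Z_t$ and $Z_t'$ are $\F_t$-measurable, substituting $x_1=Z_t$, $x_2=Z_t'$ renders the additive term $\phi(Z_t')(1-Z_t')-\phi(Z_t)(1-Z_t)$ and the factor $\phi'(Z_t')-\phi'(Z_t)$ both $\F_t$-measurable, so they may be pulled out of any conditioning on $\F_t$. This is the single observation that drives the entire proof.

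For part~(i), I would condition on $\F_t$ in $\losd(Z_t,Z_t',W)$, taking $W=h(Y_{t+1})$ and then $W=\tilde Y_{t+1}$. The additive $\F_t$-measurable term passes through unchanged, while affineness together with $\F_t$-measurability of the factor yields $\E[\losd(Z_t,Z_t',W)\mid\F_t]=\{\text{additive term}\}+(\phi'(Z_t')-\phi'(Z_t))\,\E[W\mid\F_t]$. Subtracting the two versions, the additive terms cancel and the difference reduces to $(\phi'(Z_t')-\phi'(Z_t))\,(\E[h(Y_{t+1})\mid\F_t]-\E[\tilde Y_{t+1}\mid\F_t])$, which vanishes by~\eqref{eq:condmoment}; this is~\eqref{eq:equalscorediffmoment2}. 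Equation~\eqref{eq:equalscorediffmoment3} then follows by taking expectations and invoking the tower property.

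For part~(ii), the additive $\F_t$-measurable term does not affect the conditional variance, and pulling out the squared factor gives the conditional analogue of Lemma~\ref{lem:momentdiff}(iii), namely $\Var(\losd(Z_t,Z_t',W)\mid\F_t)=(\phi'(Z_t')-\phi'(Z_t))^2\,\Var(W\mid\F_t)$. As the prefactor is nonnegative, the hypothesis $\Var(\tilde Y_{t+1}\mid\F_t)\le\Var(h(Y_{t+1})\mid\F_t)$ delivers~\eqref{eq:varscorediffmoment3} at once. For the unconditional inequality~\eqref{eq:varscorediffmoment4} I would apply the law of total variance, $\Var(X)=\E[\Var(X\mid\F_t)]+\Var(\E[X\mid\F_t])$: by part~(i) the conditional means for $W=h(Y_{t+1})$ and $W=\tilde Y_{t+1}$ coincide, so the $\Var(\E[\cdot\mid\F_t])$ contributions are identical and cancel in the difference, reducing~\eqref{eq:varscorediffmoment4} to taking expectations in~\eqref{eq:varscorediffmoment3}.

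There is no substantial obstacle here, as the argument is simply the conditional lift of the static lemma. The only point deserving care is the correct separation of the two variance components: the unconditional variance reduction holds precisely because the exact robustness of part~(i) forces the between-$\F_t$ variation to be identical for the two proxies, so that only the averaged within-$\F_t$ variation --- controlled by the hypothesis --- distinguishes the loss differences. Beyond this one must only ensure the relevant conditional moments are finite, which is guaranteed by the integrability conventions imposed on~\eqref{eq:scoreingfct}.
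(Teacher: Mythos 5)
Your proposal is correct and follows essentially the same route as the paper: both parts are obtained by running the static computations of Lemma~\ref{lem:momentdiff} conditionally on $\F_t$ (using the $\F_t$-measurability of $Z_t,Z_t'$ to pull out the affine coefficients), and the unconditional inequality~\eqref{eq:varscorediffmoment4} is derived via the law of total variance with the between-$\F_t$ terms cancelling by part~(i) --- exactly the paper's argument. You merely spell out more explicitly the measurability step that the paper compresses into the phrase ``conditional on $\F_t$.''
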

The second part of the theorem shows that a variance reduction is achieved both for testing conditional as well as unconditional dominance. 
\begin{proof}
	(i): \eqref{eq:equalscorediffmoment2} is \eqref{eq:equalscorediffmoment} in Lemma \ref{lem:momentdiff}, conditional on $\F_t$, while \eqref{eq:equalscorediffmoment3} follows from \eqref{eq:equalscorediffmoment2} by taking expected values.
	
	\smallskip
	
	(ii): \eqref{eq:varscorediffmoment3} is \eqref{eq:vardiffdiffmoment1} in Lemma \ref{lem:momentdiff}, (iii), conditional on $\F_t$. As for \eqref{eq:varscorediffmoment4}, we have that
	\begin{align*}
		\Var\big(\losd\big(Z_t,Z_t', \tilde Y_{t+1}\big) \big) & = \E\Big[ \Var\big(\losd\big(Z_t,Z_t', \tilde Y_{t+1}\big)| \F_t\big)\Big] + \Var\Big(\E\big[\losd\big(Z_t,Z_t', \tilde Y_{t+1}\big) | \F_t\big]\Big)
	\end{align*}
	Since by \eqref{eq:equalscorediffmoment2},
	\[ \Var\Big(\E\big[\losd\big(Z_t,Z_t', \tilde Y_{t+1}\big) | \F_t\big]\Big) = \Var\Big(\E\big[\losd\big(Z_t,Z_t', h( Y_{t+1})\big) | \F_t\big]\Big)\]
	the conclusion follows since
	\begin{align*}
		\E\Big[ \Var\big(\losd\big(Z_t,Z_t', \tilde Y_{t+1}\big)| \F_t\big)\Big] & = \E\Big[ \big(\phi'(Z_t') - \phi'(Z_t)\big)^2\,\Var\big(\tilde Y_{t+1}| \F_t\big)\Big]\\
		& \leq \E\Big[ \big(\phi'(Z_t') - \phi'(Z_t)\big)^2\,\Var\big(h( Y_{t+1})| \F_t\big)\Big]\\
		& = \E\Big[ \Var\big(\losd\big(Z_t,Z_t', h( Y_{t+1})\big)| \F_t\big)\Big].
	\end{align*}
\end{proof}

\section{Ratios of Moments and Further Parameters}\label{sec:ratiomoments}

Suppose that $D = I$ is an interval, $h: I \to \R$ and $s:I \to (0, \infty)$ are measurable functions such that $\E_F[|h(Y)|] < \infty$, $\E_F[s(Y)] < \infty$ for all $F \in \Theta$. The target parameter is
\[ T(F) = \frac{\E_F[h(Y)]}{\E_F[s(Y)]}.\]
\citet{gneiting2011making} shows that strictly consistent loss functions for $T(F)$ are of the form
\begin{equation}\label{eq:scoreingfct1}
	\los(x,y) = s(y)\big(\phi(y) - \phi(x)\big)  - \phi'(x)\, \big(h(y)-x\, s(y)\big)  - \phi'(y)\, \big(h(y)-y\, s(y)\big)
\end{equation}
($y,x \in I$), where it is additionally assumed that
\[ \E_F[|h(Y)| |\phi'(Y)|] < \infty,  \; \E_F[|s(Y)| |\phi(Y)|] < \infty, \; \E_F[|Y|\, |s(Y)| |\phi(Y)|] < \infty, \; F \in \Theta.\]
\begin{lemma}\label{lem:ratiomoment}
	Consider \eqref{eq:scoreingfct1} and forecasts $x_1,x_2 \in \R$.
	\begin{enumerate}
		\item 	The loss difference,
		\begin{align}
			\los(x_1,y) - \los(x_2,y) & =  \big(\phi'(x_2) - \phi'(x_1)\big)\, h(y)
			+ \big(x_1\,\phi'(x_1) - \phi(x_1) - x_2\,\phi'(x_2) + \phi(x_2)\big)\, s(y) \label{eq:lossdiffmoment1} \\
			& = : \losd\big(x_1,x_2,h(y), s(y)\big) \nonumber
		\end{align}
		depends on $y$ only through $h(y)$ and $s(y)$.
		\item \sloppy
If $F \in \Theta$, and $\tilde Y_1, \tilde Y_2$ are random variables (with given distributions) such that $\E_F[h(Y)] = \E[ \tilde Y_1] $ and $\E_F[s(Y)] =\E[ \tilde Y_2] $ (the moment is the same), then
		\begin{equation}\label{eq:equalscorediffmoment1}
			\E_F\big[\losd\big(x_1,x_2,h(Y), s(Y)\big) \big]
			= \E\big[\losd\big(x_1,x_2,\tilde Y_1, \tilde Y_2\big) \big].
		\end{equation}
		\item If in addition to (ii) we have that $\Var(a\, \tilde Y_1 + b\, \tilde Y_2) \leq \Var_F(a\, h(Y) + b\, s(Y)) $ for $a,b \in \R$ we have that
		\begin{equation}\label{eq:vardiffdiffmoment2}
			\Var\big(\losd\big(x_1,x_2,\tilde Y_1, \tilde Y_2\big) \big) \leq \Var_F\big(\losd\big(x_1,x_2,h(Y),s(Y)\big) \big) .
		\end{equation}
		
	\end{enumerate}
\end{lemma}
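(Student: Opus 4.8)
The plan is to follow the three-part template already used for Lemma~\ref{lem:momentdiff}, the only new feature being that two moments $h$ and $s$ are carried through in parallel. The pivotal structural fact is that the loss difference is \emph{linear} in the pair $\big(h(y),s(y)\big)$; once that is in place, part~(ii) reduces to linearity of expectation and part~(iii) to a single specialization of the stated variance hypothesis. No probabilistic machinery beyond this is needed.

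First I would establish part~(i) by inserting \eqref{eq:scoreingfct1} and forming $\los(x_1,y)-\los(x_2,y)$. The terms of \eqref{eq:scoreingfct1} that do not depend on $x$, namely $s(y)\phi(y)$, $-\phi'(y)h(y)$, and $y\,s(y)\phi'(y)$, are common to $x_1$ and $x_2$ and hence cancel in the difference. Collecting the surviving terms by their coefficients of $h(y)$ and of $s(y)$ produces exactly \eqref{eq:lossdiffmoment1}, exhibiting the dependence on $y$ solely through $h(y)$ and $s(y)$. This step is routine; the only point deserving care is verifying that the $\phi(y)$- and $\phi'(y)$-terms cancel completely, since these are the genuinely $x$-free pieces of the more elaborate ratio-of-moments score.

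For parts~(ii) and~(iii) I would abbreviate $A := \phi'(x_2)-\phi'(x_1)$ and $B := x_1\phi'(x_1)-\phi(x_1)-x_2\phi'(x_2)+\phi(x_2)$, so that $\losd(x_1,x_2,u,v)=A\,u+B\,v$ is linear in its last two arguments. Part~(ii) is then immediate: $\E_F[\losd(x_1,x_2,h(Y),s(Y))]=A\,\E_F[h(Y)]+B\,\E_F[s(Y)]$, and the matching-moment hypotheses $\E_F[h(Y)]=\E[\tilde Y_1]$, $\E_F[s(Y)]=\E[\tilde Y_2]$ make this equal to $A\,\E[\tilde Y_1]+B\,\E[\tilde Y_2]=\E[\losd(x_1,x_2,\tilde Y_1,\tilde Y_2)]$, which is \eqref{eq:equalscorediffmoment1}. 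Part~(iii) follows by noting $\Var_F(\losd(x_1,x_2,h(Y),s(Y)))=\Var_F(A\,h(Y)+B\,s(Y))$ and likewise with tildes, so that specializing the hypothesis $\Var(a\tilde Y_1+b\tilde Y_2)\le\Var_F(a\,h(Y)+b\,s(Y))$ to $a=A$, $b=B$ gives \eqref{eq:vardiffdiffmoment2} at once.

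The genuine conceptual point — less an obstacle than an explanation of the hypothesis — is why the variance condition in (iii) is imposed over \emph{all} $a,b\in\R$ rather than as two separate bounds $\Var(\tilde Y_i)\le\Var_F(\cdot)$. Expanding, $\Var_F(A\,h(Y)+B\,s(Y))=A^2\Var_F(h(Y))+B^2\Var_F(s(Y))+2AB\,\Cov_F(h(Y),s(Y))$, so componentwise bounds would leave the cross term $2AB\,\Cov(\tilde Y_1,\tilde Y_2)$ uncontrolled, and this term can have either sign. Since the coefficients $A$ and $B$ depend on the arbitrary forecasts $x_1,x_2$, a condition covering the full linear combination uniformly in $a,b$ is the clean and correct way to guarantee the variance reduction for every comparison; this is exactly the hypothesis the proof consumes, and it is the natural two-moment analogue of the scalar condition $\Var(\tilde Y)\le\Var_F(h(Y))$ in Lemma~\ref{lem:momentdiff}.
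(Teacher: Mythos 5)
Your proposal is correct and follows exactly the same route as the paper's (very terse) proof: insert \eqref{eq:scoreingfct1}, observe that the $x$-free terms cancel so the loss difference is linear in $\big(h(y),s(y)\big)$, and then read off (ii) from linearity of expectation and (iii) by specializing the variance hypothesis to the coefficients $a=\phi'(x_2)-\phi'(x_1)$ and $b=x_1\phi'(x_1)-\phi(x_1)-x_2\phi'(x_2)+\phi(x_2)$. Your closing remark on why the hypothesis in (iii) must control arbitrary linear combinations (to handle the covariance cross term) is a useful elaboration of the same point the paper makes informally after Theorem~\ref{th:domratiomom}.
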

\begin{proof}
	The form \eqref{eq:lossdiffmoment1} of the loss difference follows directly from inserting \eqref{eq:scoreingfct1}. Then \eqref{eq:equalscorediffmoment1} and \eqref{eq:vardiffdiffmoment2} follow immediately from the form of the loss difference.
\end{proof}
Note that the result of the lemma does not contradict Theorem 1 in Hoga and Dimitriadis (2022), since they only allow a single proxy $\hat Y_t$ which enters the loss function in the same way as $Y_t$, whereas in the above lemma we require proxies  $\tilde Y_1$ and $\tilde Y_2$ for the two moments $h(Y)$ and $s(Y)$.

\begin{theorem}[Forecast dominance testing: Ratio of moments]\label{th:domratiomom}
	Consider forecasting the ratio of conditional moments $\E\, \big[h(Y_{t+1}) | \F_t\big] / \E\, \big[s(Y_{t+1}) | \F_t\big]$, and suppose that $\tilde Y_{t+1}^{(j)}$ are $\F_{t+1}$-measurable with
	\begin{equation}\label{eq:condmoment1}
		\E\, \big[\tilde Y_{t+1}^{(1)} | \F_t\big] = \E\, \big[h(Y_{t+1}) | \F_t\big], \qquad \E\, \big[\tilde Y_{t+1}^{(2)} | \F_t\big] = \E\, \big[s(Y_{t+1}) | \F_t\big] \qquad \text{ a.s.}
	\end{equation}
	\begin{enumerate}
		\item 	For the loss difference \eqref{eq:lossdiffmoment1}, for any two  forecasts $Z_t$ and $Z_t'$ ($\F_t$-measurable random variables),
		\begin{equation}\label{eq:equalscorediffmoment4}
			\E\big[\losd\big(Z_t,Z_t',h(Y_{t+1}), s(Y_{t+1})\big) | \F_t\big] = \E\big[\losd\big(Z_t,Z_t', \tilde Y_{t+1}^{(1)}, \tilde Y_{t+1}^{(2)}\big) | \F_t\big],
		\end{equation}
		and hence in particular
		\begin{equation}\label{eq:equalscorediffmoment5}
			\E\big[\losd\big(Z_t,Z_t',h(Y_{t+1}), s(Y_{t+1})\big) \big] = \E\big[\losd\big(Z_t,Z_t', \tilde Y_{t+1}^{(1)}, \tilde Y_{t+1}^{(2)}\big) \big].
		\end{equation}
		%
		%
		\item If in addition to \eqref{eq:condmoment1} we have that for all $\F_t$-measurable random variables $V,W$, we have that
		\begin{align*}
			& V^2\, \Var( \tilde Y_{t+1}^{(1)} | \F_t) + W^2\, \Var( \tilde Y_{t+1}^{(2)} | \F_t) + \, 2 \, V\, W \, \Cov(\tilde Y_{t+1}^{(1)},  \tilde Y_{t+1}^{(2)} | \F_t) \\
			& \qquad \leq V^2\, \Var( h(Y_{t+1}) | \F_t) + W^2\, \Var( s(Y_{t+1}) | \F_t) + \, 2 \, V\, W \, \Cov(h(Y_{t+1}),  s(Y_{t+1}) | \F_t),
		\end{align*}
		then
		\begin{equation}\label{eq:varscorediffmoment5}
			\Var\big(\losd\big(Z_t,Z_t', \tilde Y_{t+1}^{(1)}, \tilde Y_{t+1}^{(2)}\big) | \F_t\big) \leq 	\Var\big(\losd\big(Z_t,Z_t',h(Y_{t+1}), s(Y_{t+1})\big) | \F_t\big)
		\end{equation}
		as well as
		\begin{equation}\label{eq:varscorediffmoment6}
\Var\big(\losd\big(Z_t,Z_t', \tilde Y_{t+1}^{(1)}, \tilde Y_{t+1}^{(2)}\big)\big) \leq 	\Var\big(\losd\big(Z_t,Z_t',h(Y_{t+1}), s(Y_{t+1})\big)\big)
		\end{equation}
		
	\end{enumerate}
\end{theorem}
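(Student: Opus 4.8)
The plan is to mirror the proof of Theorem \ref{th:dommoment}, exploiting that the loss difference \eqref{eq:lossdiffmoment1} is \emph{linear} in the two arguments $h(y)$ and $s(y)$. Writing the coefficients as $V := \phi'(Z_t') - \phi'(Z_t)$ and $W := Z_t\,\phi'(Z_t) - \phi(Z_t) - Z_t'\,\phi'(Z_t') + \phi(Z_t')$, the key structural observation is that since $Z_t, Z_t'$ are $\F_t$-measurable, both $V$ and $W$ are $\F_t$-measurable. This means that under conditioning on $\F_t$ the coefficients behave as constants, so that Lemma \ref{lem:ratiomoment} can be applied conditionally, with $V,W$ playing exactly the role of the free coefficients $a,b$ that appear there.

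For part (i), I would first establish \eqref{eq:equalscorediffmoment4} by applying the conditional version of Lemma \ref{lem:ratiomoment}(ii). Concretely, by linearity and $\F_t$-measurability of $V,W$,
\begin{align*}
\E\big[\losd\big(Z_t,Z_t',h(Y_{t+1}), s(Y_{t+1})\big) \mid \F_t\big]
&= V\, \E\big[h(Y_{t+1}) \mid \F_t\big] + W\, \E\big[s(Y_{t+1}) \mid \F_t\big],
\end{align*}
and the analogous identity holds with $\tilde Y_{t+1}^{(1)}, \tilde Y_{t+1}^{(2)}$ in place of $h(Y_{t+1}), s(Y_{t+1})$. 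The matching conditions \eqref{eq:condmoment1} then force the two conditional expectations to coincide, giving \eqref{eq:equalscorediffmoment4}; taking unconditional expectations yields \eqref{eq:equalscorediffmoment5}.

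For part (ii), the conditional inequality \eqref{eq:varscorediffmoment5} follows by expanding the conditional variance of the linear combination as a quadratic form,
\begin{align*}
\Var\big(\losd\big(Z_t,Z_t', \tilde Y_{t+1}^{(1)}, \tilde Y_{t+1}^{(2)}\big) \mid \F_t\big)
&= V^2\, \Var\big(\tilde Y_{t+1}^{(1)} \mid \F_t\big) + W^2\, \Var\big(\tilde Y_{t+1}^{(2)} \mid \F_t\big) \\
&\quad + 2\,V\,W\, \Cov\big(\tilde Y_{t+1}^{(1)}, \tilde Y_{t+1}^{(2)} \mid \F_t\big),
\end{align*}
and then invoking the hypothesis of part (ii) with the specific choice of $\F_t$-measurable random variables $V,W$ just defined; the right-hand side becomes the corresponding quadratic form in $h(Y_{t+1}), s(Y_{t+1})$, which equals $\Var(\losd(\cdots h(Y_{t+1}), s(Y_{t+1})) \mid \F_t)$. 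For the unconditional inequality \eqref{eq:varscorediffmoment6}, I would apply the law of total variance to both sides: the ``variance-of-the-conditional-mean'' terms agree by part (i) (since \eqref{eq:equalscorediffmoment4} equates the conditional expectations path by path), while the ``mean-of-the-conditional-variance'' terms are ordered by taking expectations in \eqref{eq:varscorediffmoment5}; adding the two contributions gives the claim.

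The calculations here are routine; the only genuinely delicate point is the legitimacy of applying the static Lemma \ref{lem:ratiomoment} conditionally, i.e.\ pulling the $\F_t$-measurable coefficients $V,W$ out of the conditional expectation and variance. This is standard, but it is where one must verify that the required integrability (inherited from the conditions imposed for \eqref{eq:scoreingfct1}) is in force so that the conditional moments are well defined almost surely. Once that is granted, recognizing that the hypothesis of part (ii) is precisely the quadratic-form inequality matching the variance expansion makes the variance-reduction conclusion immediate.
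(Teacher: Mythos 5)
Your proposal is correct and follows essentially the same route as the paper, which simply states that the result is immediate from Lemma \ref{lem:ratiomoment} applied conditionally on $\F_t$ (with the $\F_t$-measurable coefficients $V,W$ acting as constants), and that \eqref{eq:varscorediffmoment6} follows via the law of total variance exactly as \eqref{eq:varscorediffmoment4} in Theorem \ref{th:dommoment}. Your write-up merely makes explicit the identification of $V,W$ with the coefficients $a,b$ of the lemma and the quadratic-form expansion, which the paper leaves implicit.
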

The proof is immediate from Lemma \ref{lem:ratiomoment}, and the final inequality \eqref{eq:varscorediffmoment6} follows as \eqref{eq:varscorediffmoment4} in Theorem \ref{th:dommoment}. The condition for a potential variance reduction in Theorem \ref{th:domratiomom}, (ii), is more restrictive than that from Theorem \ref{th:dommoment}, (ii), and apart from relating the variances of  $s(Y)$ and $h(Y)$ to those of the proxies, also involves conditional covariances.

\medskip

\bigskip

Theorem \ref{th:domratiomom} does not apply to measures such as skewness and kurtosis, which even for centered distributions are known not to allow for strictly consistent scoring functions. 
However, the revelation principle, Theorem 4 in \citet{gneiting2011making}, and the elicitability (existence of a strictly consistent scoring function) and hence joint elicitability of moments implies that for centered distributions, these measures are elicitable when considered together with the second moment. Roughly speaking, for the skewness this involves the two-dimensional parameter consisting of third and second moment, and for the kurtosis consisting of fourth and second moment.
The analysis of the corresponding loss-differences is then similar to that in Theorem \ref{th:domratiomom}.

\section{Diebold-Mariano testing} \label{sec-DM}

We briefly summarize the DM test \citep{diebold1995comparing} for forecast dominance, where we shall focus on unconditional dominance. For a discussion of conditional dominance testing, together with asymptotic theory and local power analysis see \citet{hoga2022testing}.

As proposed in \citet{fissler2016expected} and \cite{nolde2017elicitability}, in comparative backtesting, we are interested in the following null hypotheses
\begin{align*}
	H_0^-: \quad \text{Forecast $x_{1,t}$ predicts at least as well as $x_{2,t}$}, \\
	H_0^+: \quad \text{Forecast $x_{1,t}$ predicts at most as well as $x_{2,t}$}.
\end{align*}
The forecast $x_{2,t}$ is used as a benchmark. If the hypothesis $H_0^-$ is rejected, then $x_{2,t}$ is worse than $x_{1,t}$; if $H_0^+$ is rejected, $x_{1,t}$ is better than $x_{2,t}$. The error of the first kind for rejecting one of the two hypotheses, even though they are true, can be controlled by the level of significance. As in \cite{nolde2017elicitability}, we define
\begin{align*}
	\lambda &= \lim_{n\to\infty} \frac1n \sum_{t=1}^n \E[\los(x_{1,t},Y_{t+1})-\los(x_{2,t},Y_{t+1})]
	= \E[\los(x_{1,t},Y_{t+1})-\los(x_{2,t},Y_{t+1})]
\end{align*}
(assuming first-order stationarity). Then, dominance of $x_{1,t}$ over $x_{2,t}$ is equivalent to $\lambda\leq 0$, and $x_{1,t}$ predicts at most as well as $x_{2,t}$ if $\lambda\geq 0$. Therefore, the comparative backtesting hypotheses can be reformulated as
\begin{align*}
	H_0^-: \; \lambda\leq 0, \qquad H_0^+: \; \lambda\geq 0.
\end{align*}
Forecast equality can be tested with the so-called DM test \citep{diebold1995comparing, giacomini2006tests,diebold2015comparing}, which is based on normalized loss differences. Here, the test statistic is given by
\begin{align*}
	S  & =  \frac{\sqrt{n} \, \Delta_n\bar{L}}{\hat\tau},
\end{align*}
where $\Delta_n\bar{L}=1/n\sum_{t=1}^n L(x_{1,t},y_{t+1})-L(x_{2,t},y_{t+1})$ and $\hat\tau^2$ is an estimator of the long-run asymptotic variance of the loss differences. One possible choice for $\hat\tau^2$ is
\begin{align*} 
	\hat\tau^2  & = \left\{
	\begin{array}{ll}
		\hat\gamma_0 & \text{ if } h=1, \\
		\hat\gamma_0 + 2\sum_{j=1}^{h-1} \hat\gamma_j & \text{ if } h\geq 2,
	\end{array}
	\right.
\end{align*}
where $\hat\gamma_j$ denotes the lag $j$ sample autocovariance of the sequence of loss differences \citep{gneiting2011comparing,lerch2017forecaster}. Another possible choice \citep{diks2011likelihoodbased} is
$\hat\tau^2=\hat\gamma_0 + 2\sum_{j=1}^{J} (1-j/J)\hat\gamma_j$,
where $J$ is the largest integer less than or equal to $n^{1/4}$. As a compromise, we used $\hat\tau^2=\hat\gamma_0+2\hat\gamma_1$.
Under the null hypothesis of a vanishing expected loss difference and some further regularity conditions, the test statistic $S$ is asymptotically standard normally distributed.
Therefore, we obtain an asymptotic level-$\eta$ test of $H_0^+$ if we reject the null hypothesis when $S\leq \Phi^{-1}(\eta)$, and of $H_0^-$ if we reject the null hypothesis when $S\geq \Phi^{-1}(1-\eta)$.

To evaluate the tests for a fixed significance level $\eta\in(0,1)$, we use the following three-zone approach of \cite{fissler2016expected}.
If $H_0^-$ is rejected at level $\eta$, we conclude that the forecast $x_{1,t}$ is worse than $x_{2,t}$, and we mark the result in red; similarly, if $H_0^+$ is rejected at level $\eta$, forecast $x_{1,t}$ is better than $x_{2,t}$, and we mark the result in green. Finally, if neither $H_0^-$ nor $H_0^+$ can be rejected, the marking is yellow.

\section{Simulations} \label{sec-sims}

In this section,  we report some results of an extensive simulation study. Additional results of the simulations are contained in Section 1 of the Supplementary Material \citep{holzmann2022proxy}. First, in Section \ref{sec-2-mom} we investigate proxies for the volatility, and then in Sections \ref{sec:highermoments} and \ref{sec:apARCH} turn to higher-order moments.

\subsection{Squared returns and realized variance} \label{sec-2-mom}

In the first two sections, as data generating process (DGP) for the log returns we use a  GARCH(1,1) process defined by
\begin{align*}
\sigma_t^2 = a_0+a_1r_{t-1}^2+b\sigma_{t-1}^2, \quad r_t=\sigma_t \varepsilon_{t},
\end{align*}
where additionally
\begin{align}
\varepsilon_{t}&=\sum_{i=1}^m \varepsilon_{t,i}, \quad
\varepsilon_{t,i} ={\cal N}(0,1/m), \ i=1,\ldots,m,
\end{align}
and all $\varepsilon_{t,i}$ independent. Assuming that $\sigma_t$ is constant on $(t-1,t]$, observed intraday returns are given by $r_{t,i}= \sigma_t\varepsilon_{t,i}$.
We use $a_0=0.02, a_1=0.08, b=0.85$, $m=100$ and $m=13$; the first is a typical range using 5-min returns, the latter corresponds to the use of half-hour returns at the New York Stock Exchange (NYSE).
While this is certainly an oversimplified model for actual high-frequency data, it serves to illustrate the use of proxies, and is similar to the setting used in \citet[Section 2.2]{patton2011volatility}. Real high - frequency data for log-returns of cryptocurrencies are analysed in Section \ref{sec-data}.

As the total length of the time series, we take $T=1500$ and $T=6000$.
For these two time series, we generate rolling one-step-ahead forecasts of the conditional variance using a moving time window, with window length $T/3$, refitted every 10 time steps, for GARCH(1,1), ARCH(1), ARCH(2), and ARCH(7) models. Hence, for $T=1500$, the fit is based on 500 values, and the DM tests use 1000 forecasts of each model.
All computations are done in R \citep{rcore2021r} using the R packages \verb+rugarch+ \citep{ghalanos2020rugarch} and \verb+fGarch+ \citep{wuertz2020fgarch}.

To stabilize the results, the following figures show the means of the results of 500 replications for Figures \ref{fig1} and \ref{fig2}, and of 50 replications for all figures after Figure \ref{fig2}, of the DM test.
All figures use the three-zone approach described in the previous section with the following modification: we simultaneously show rejection of $H_0^-$ at level $0.1,0.05$ and $0.01$ by marking in light red, red and dark red, respectively. Marking in light green, green and dark green signals rejection of $H_0^+$ at the three levels.
Besides the forecasts from the different (G)ARCH models, we show the result for the optimal forecast, given by the true conditional volatilities. Each figure shows four plot matrices: in the left (right) column, the squared returns (realized variances) are used as proxies. In the upper row, the loss function is the mean squared error $L(x,y)=(x-y)^2$, whereas in the lower row, the QLIKE loss function $\tilde{L}(x,y)=y/x-\log(y/x)-1$ is used. These loss functions correspond to the choice of $\phi(y)=y^2$ and $\phi(y)=-\log(y)$ in eq. (\ref{eq:scoreingfct}).
The power of the DM test and even the ranking of competing forecasts may depend on the particular choice of the strictly consistent loss function in case of misspecified forecasts. See \citet{patton2020comparing} and \citet{ehm2016quantiles} for a discussion.
The QLIKE loss  for the mean was proposed in \citet{patton2011volatility} as a $0$ - homogeneous alternative to the more standard squared loss. It requires fewer moment assumptions, and a favorable performance for volatility forecasting was found in various empirical studies including \citet{patton2009evaluating}.

\bigskip
Figure \ref{fig1} shows the results of DM tests under normal innovations, with $T=1500$ and $m=100$. The left panels show results for the squared returns $r_t^2$, the right panels for the realized variances $RV_t=\sum_{i=1}^m r_{t,i}^2$.

\begin{figure}
\centering
\includegraphics[width=0.9\textwidth,height=0.4\textheight]
{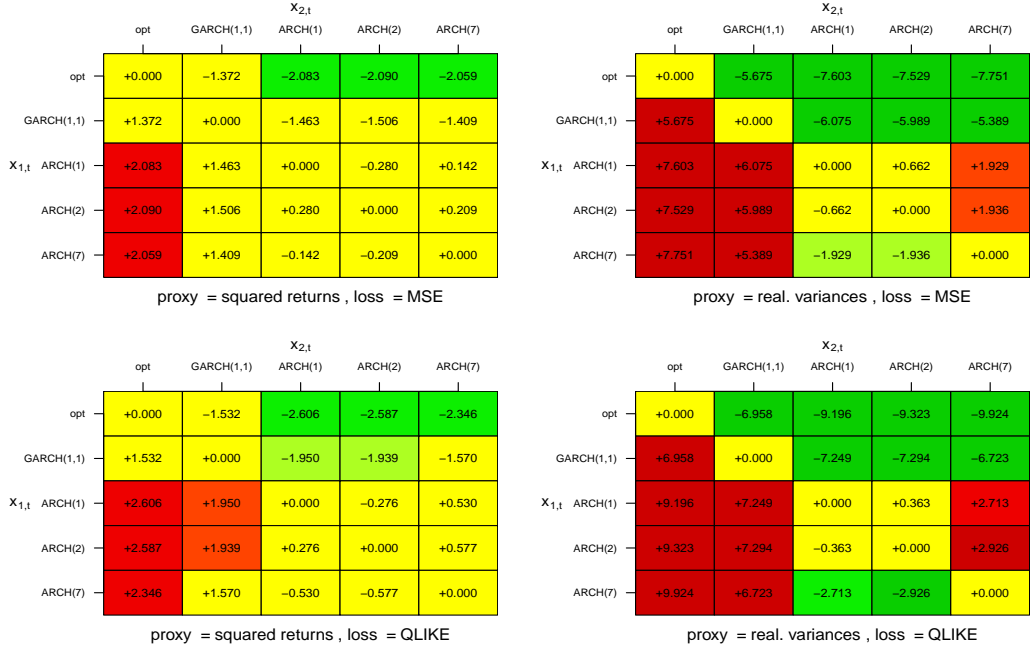}
\caption{Results of DM tests (based on 500 replications), normal distribution, left: squared returns, right: realized variances, $m=100, T=1500$.}
\label{fig1}
\end{figure}

Let us first discuss the results shown in the lower-left panel, i.e. for the QLIKE loss and using the squared returns as proxies. The second value in the left column, +1.532, is the average value of the DM test statistic comparing the forecast $x_{1,t}$ from the GARCH(1,1) model with the optimal forecast $x_{2,t}$, the true conditional volatilities. The positive value hints at the superiority of $x_{2,t}$, but the value is not statistically significant. The results are significant when comparing the ARCH(1), ARCH(2) and ARCH(7) model with the optimal forecast; here, the red color indicates a significant rejection of $H_0^-$ at the 0.05-level.
The light red entries in the second column show that forecasts from the ARCH(1) and ARCH(2) models are worse than the forecasts from the GARCH(1,1) model (which is the true data generating process) at level 0.1, but not on the 0.05-level. The forecast from GARCH(7) is not significantly worse than the GARCH(1,1).

Now, let's turn to the lower-right panel with the realized variances as proxies. Here, all corresponding entries are marked in dark red, signalizing the rejection of $H_0^-$ at level 0.01 in all cases. Hence, the power of the DM test is clearly higher by using realized variances compared to squared returns. Looking at the upper row, we see that the results for the MSE are similar from a qualitative point of view. However, there are fewer statistically significant entries compared to the QLIKE loss function. Hence, the latter allows for sharper forecast evaluation in this example.

\smallskip
Figure \ref{fig2} shows results from the same setting as Fig. \ref{fig1} apart from that we use $m=13$, corresponding to the use of half-hourly returns, instead of $m=100$. Hence, the results in the two left panels are the same as in Fig. \ref{fig1} (up to simulation error). The right panels are similar as in Fig. \ref{fig1}, as well. A closer look shows that all entries have smaller absolute values, showing the decreasing power in differentiating forecasts.

\begin{figure}
\centering
\includegraphics[width=0.9\textwidth,height=0.4\textheight]
{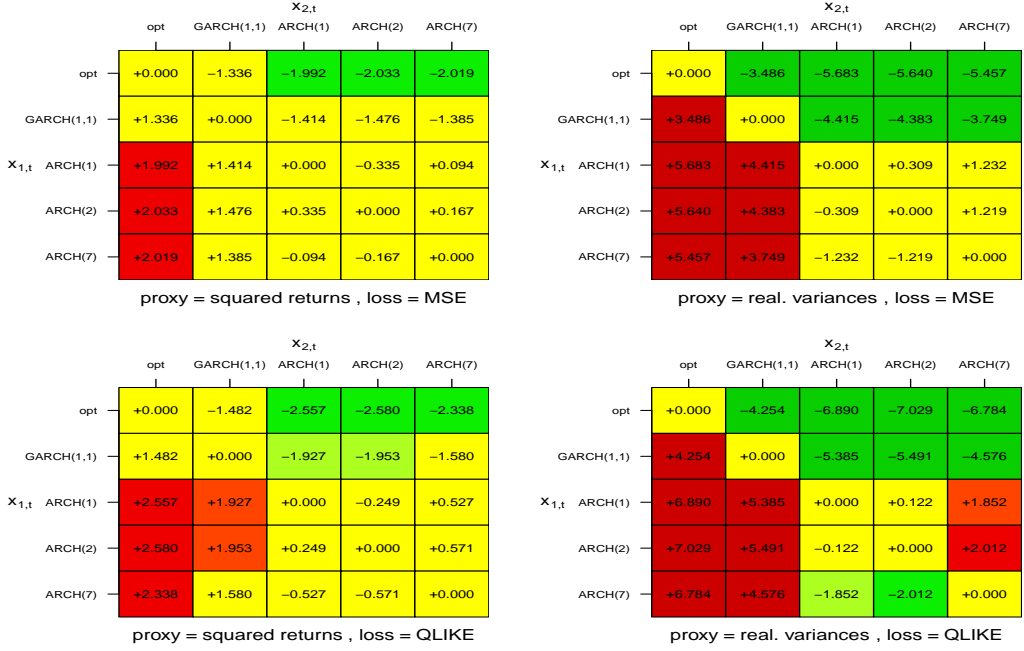}
\caption{Results of DM tests (based on 500 replications), normal distribution, left: squared returns, right: realized variances, $m=13, T=1500$.}
\label{fig2}
\end{figure}

\smallskip
We also replaced the normal distribution of the intraday innovations by centered skewed and long-tailed distributions.
For this, we used the normal inverse gaussian distribution $nig(\mu,\delta,\alpha,\beta)$ \citep{barndorff1997nig} with parameters
\begin{align*}
\alpha&=2, \quad \beta=1, \quad \gamma=\sqrt{\alpha^2-\beta^2}, \quad \delta=\gamma^3/\alpha^2/m, \quad \mu=-\delta\beta/\gamma.
\end{align*}
This results in $\E[\varepsilon_{t,i}]=0$ and $Var(\varepsilon_{t,i})=1/m$.
Since the class of nig distributions with fixed shape parameters $\alpha$ and $\beta$ is closed under convolution, the distribution of $\varepsilon_{t}=\sum_{i=1}^m \varepsilon_{t,i}$ is given by $nig(m\mu,m\delta,\alpha,\beta)$, with $\E[\varepsilon_{t}]=0, Var(\varepsilon_{t})=1$, \ $\E[\varepsilon_{t}^3]=1$, and $\E[\varepsilon_{t}^4]=17/3$.

Fig. \ref{fig5} shows the results of the DM tests as in Fig. \ref{fig1}, i.e. for $m=100$, using nig instead of normally distributed innovations. Again, the results are qualitatively comparable to the results in Fig. \ref{fig1}, but the absolute values of the entries are generally smaller. Hence, the change in the distribution of the innovations has a negative effect on the power of the test. Note that this decrease of power is larger for the realized variances than for the squared returns. This can be explained by the fact that the skewness and kurtosis of the daily innovations are rather modest with values of 1 and 5.67, whereas the skewness and kurtosis of the intraday innovations are 10 and  269.8, respectively.

\begin{figure}
\centering
\includegraphics[width=0.9\textwidth,height=0.4\textheight]
{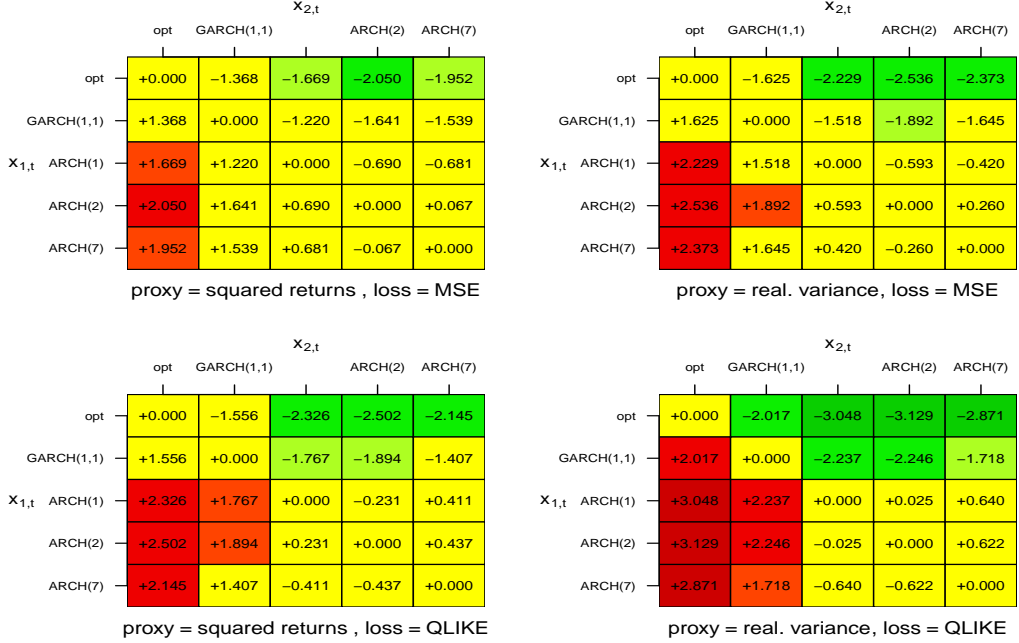}
\caption{Results of DM tests (based on 50 replications), nig distribution, left: squared returns, right: realized variances, m=100, T=1500.}
\label{fig5}
\end{figure}

\subsection{Higher moments}\label{sec:highermoments}

The use of realized higher moments, skewness, and kurtosis to estimate and forecast returns has become quite standard in the literature. For example,
\cite{neuberger2012realized} analyzed realized skewness and showed that high-frequency data can be used to provide more efficient estimates of the skewness in price changes over a period.
\cite{amaya2015realized} constructed measures of realized daily skewness and kurtosis based on intraday returns, and analyzed moment-based portfolios.
Recently, \cite{shen2018surprising} discussed the explanatory power of higher realized moments.

\subsubsection{Third moment} \label{sec-3-mom}

Assuming $\E[r_t|\F_{t-1}]=0$, we are interested in the conditional third moment $\rho_t=\E[r_t^3|\F_{t-1}]$. Possible proxies for $\rho_t$ are the cubed return $r_t^3$ and the realized third moment $RM(3)_t=\sum_{i=1}^m r_{t,i}^3$.
We use the GARCH(1,1) model of subsection \ref{sec-2-mom}, with the normal inverse Gaussian distribution for the innovations. Under this model, we obtain
\begin{align*}
\rho_t &= \E\left[r_t^3|\F_{t-1}\right]
 = \E\left[\sigma_t^3 \varepsilon_t^3|\F_{t-1}\right]
 = \sigma_t^3 \, \E[\varepsilon_t^3],  \\
\E\left[\varepsilon_t^3\right]
&= \E\left[ \left(\sum_{i=1}^m \varepsilon_{t,i}\right)^3 \right] \\
&= \E\left[ \sum_{i} \varepsilon_{t,i}^3  + 3 \sum_{i<j} \varepsilon_{t,i}^2  \varepsilon_{t,j} + 6 \sum_{i<j<k} \varepsilon_{t,i}\varepsilon_{t,j}\varepsilon_{t,k}  \right]
  = \sum_{i=1}^m \E\left[ \varepsilon_{t,i}^3 \right].
\end{align*}
Since
\begin{align*}
  \E\left[RM(3)_t|\F_{t-1}\right]
  &= \sigma_t^3 \sum_{i=1}^m \E\left[\varepsilon_{t,i}^3\right]
  = \sigma_t^3 \, \E[\varepsilon_t^3],
\end{align*}
$RM(3)_t$ is an unbiased estimator of $\rho_t$.
As forecast of $\rho_t$, we use $\tilde\sigma_t^3 \, \E[\varepsilon_t^3]$, where $\tilde\sigma_t$ denotes the one-step ahead forecast of $\sigma_t$ from the different (G)ARCH models.

%
%
%

\smallskip
Figure \ref{fig9} shows the results of DM tests under
$nig(\mu, \delta, 2, 1)$ innovations, with $\mu,\delta$ chosen such that
$\E[\varepsilon_{t,i}]=0, Var(\varepsilon_{t,i})=m^{-1/2}$.
Skewness and kurtosis of the intraday innovations are $3.61$ and 37.67, respectively, compared to the values 1 and 5.67 of the daily innovations.
Here, total length of the simulated time series is $T=6000$, and we use $m=13$, i.e. half-hourly returns. The left panels show the results for the cubed returns $r_t^3$, the right panels for the realized third moment  $RM(3)_t=\sum_{i=1}^m r_{t,i}^3$.

At first glance, the results seem to be rather different from the corresponding ones for the volatility, since the number of significant entries is much lower (cp. Fig. \ref{fig2}). But they go in the same direction: use of the realized moments increases the power of the DM test when the optimal forecast competes against the other models, or when the true data generating process is compared with ARCH models.

We have also used $T=1500$ in the simulations; the results (not shown) go in the same direction, but none of the values is statistically significant, even at the 0.1-level.

\begin{figure}
\centering
\includegraphics[width=0.9\textwidth,height=0.4\textheight]
{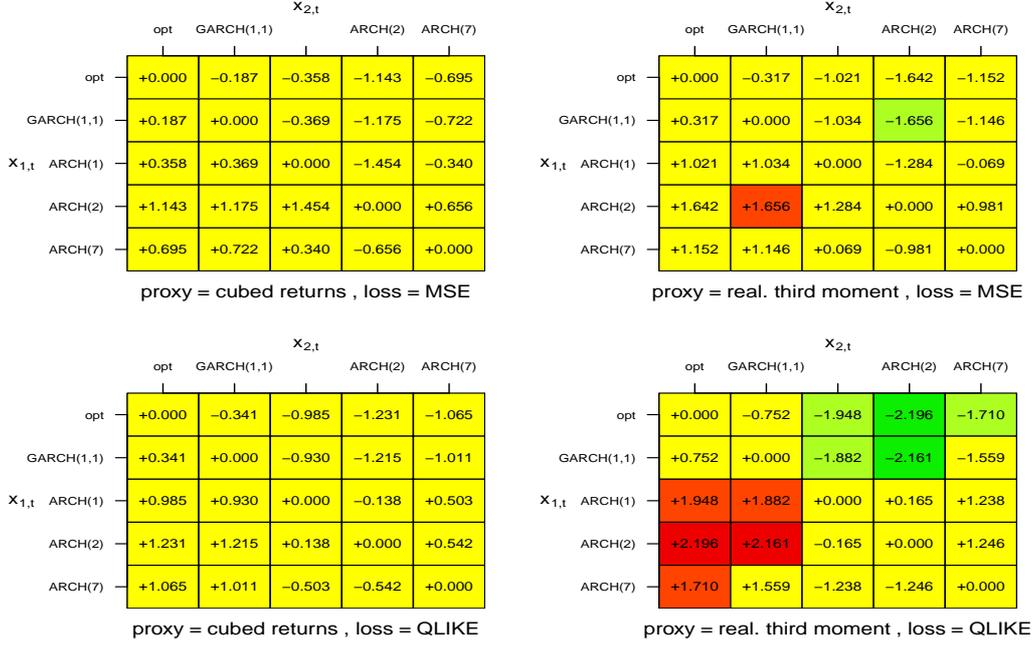}
\caption{Results of DM tests (based on 50 replications), nig-distribution, left: cubed returns, right: realized 3rd moment, m=13, T=6000.}
\label{fig9}
\end{figure}


\subsubsection{Fourth moment} \label{sec-4-mom}

Here, we are interested in the conditional fourth moment
$\tau_t = \E\left[r_t^4|\F_{t-1}\right]$.
Again, we use the GARCH(1,1) model as in subsection \ref{sec-2-mom}, and obtain
\begin{align*}
\E\left[\varepsilon_t^4\right]
 &= \sum_{i} \E[\varepsilon_{t,i}^4]
    + 6 \sum_{i<j} \E\left[ \varepsilon_{t,i}^2 \varepsilon_{t,j}^2\right]
 = \sum_{i} \E[\varepsilon_{t,i}^4]
    + 3 m(m-1) \left( \E[\varepsilon_{t,1}^2] \right)^2.
\end{align*}
Hence, unbiased proxies for $\tau_t$ are $r_t^4$ and the realized corrected fourth moment
\begin{align*}
cRM(4)_t &= \sum_{i=1}^m r_{t,i}^4 + 6 \sum_{i<j} r_{t,i}^2 r_{t,j}^2.
\end{align*}
As forecast of $\tau_t$, we use $\tilde\sigma_t^4 \, \E[\varepsilon_t^4]$.


\smallskip
The left and right panels of Fig. \ref{fig10} show the results of the DM tests, using $r_t^4$ and the realized corrected fourth moment as proxies, respectively. The innovations are normally distributed; further, $T=1500$ and $m=13$. The general picture resembles strongly the results of the volatility forecasts in Fig. \ref{fig2}, and all conclusions also apply here, even though the actual entries are a bit smaller.

\begin{figure}
\centering
\includegraphics[width=0.9\textwidth,height=0.4\textheight]
{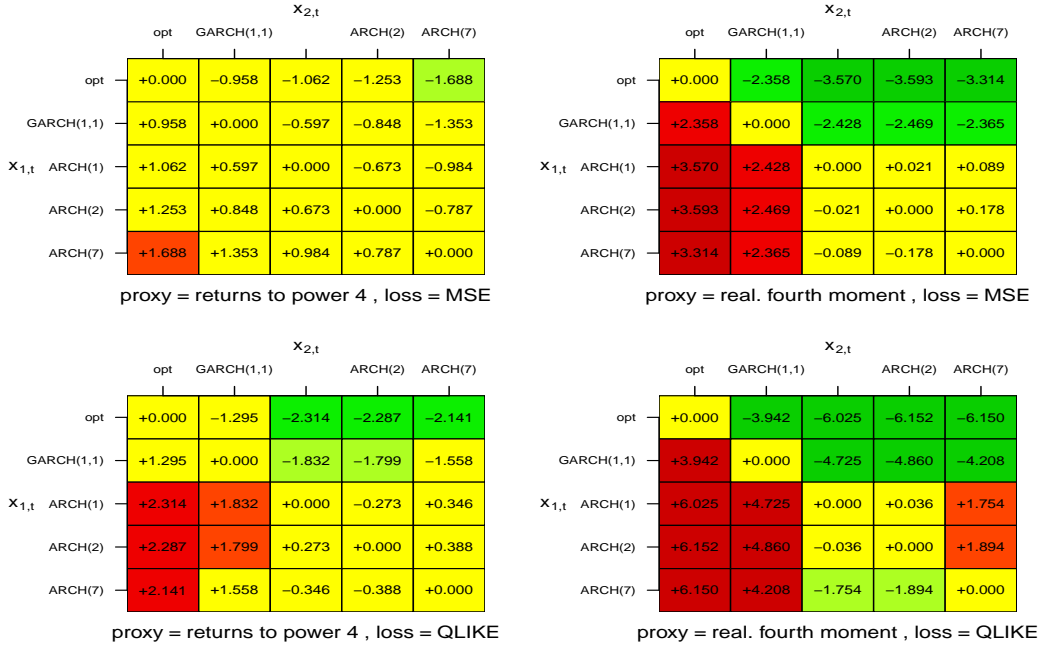}
\caption{Results of DM tests (based on 50 replications), normal distribution, left: returns to the power 4, right: realized corrected 4th moment, m=13, T=1500.}
\label{fig10}
\end{figure}

\smallskip
When replacing the normal by the nig innovations, the power of the DM test decreases strongly (cf. Fig. A.2 in the Supplementary Material \citep{holzmann2022proxy}). On the other hand, the entries are somewhat larger as in forecasting the third moment (with $T=1500$). Here, at least a few values are significant on the 0.1-level.

%

\subsection{An apARCH model for the fourth moment}\label{sec:apARCH}

Instead of modeling the volatility, and computing higher moments under this process, it is also possible to use suitable models for higher moments directly.
\citet{harvey1999autoregressive,harvey2000conditional}, for example, considered autoregressive model for conditional skewness.
\cite{lambert2002modeling} used the asymmetric power (G)ARCH or APARCH model of \cite{ding1993aparch} to describe dynamics in skewed location-scale distributions.
\cite{brooks2005autoregressive} used both separate and joint GARCH models for conditional variance and conditional kurtosis, whereas
\cite{lau2015simple} modeled (standardized) realized moments by an exponentially weighted moving average.

Hence, in this section, we model the fourth moment directly by an asymmetric power ARCH (apARCH) process \citep{ding1993aparch}. Specifically, the log returns follow an apARCH(1,1) model with $\delta=4$
\begin{align*}
\sigma_t^4 = \omega+\alpha r_{t-1}^4+\beta\sigma_{t-1}^4, \quad
r_t=\sigma_t \varepsilon_{t},
\end{align*}
where $\varepsilon_{t}=\sum_{i=1}^m \varepsilon_{t,i}$,
$\varepsilon_{t,i} ={\cal N}(0,1/m)$ for $i=1,\ldots,m$, and all $\varepsilon_{t,i}$ are independent. Assuming again that $\sigma_t$ is constant on $(t-1,t]$, intraday returns are given by $r_{t,i}= \sigma_t\varepsilon_{t,i}$.
We use $\omega=0.02, \alpha=0.08, \beta=0.75$ such that the unconditional variance is
\begin{align*}
 \sigma^2 &=
 \left( \frac{\omega}{1- E(\varepsilon_1^4)\alpha-\beta} \right)^{2/\delta}
 = \sqrt{2}.
\end{align*}
Further, $m=100$ and $T=1500$.
As in the last section, unbiased proxies for $\tau_t = \E\left[r_t^4|\F_{t-1}\right]$ are $r_t^4$ and $cRM(4)_t$.
As forecast of $\tau_t$, we use $\tilde\sigma_t^4 \, \E[\varepsilon_t^4]$, where $\tilde\sigma_t^4$ denotes the one-step ahead forecast of $\sigma_t^4$ from the different apARCH models, namely apARCH(1,1), apARCH(1), apARCH(2), and apARCH(3).

\smallskip
The left and right panels of Figure \ref{fig13} show the results of the DM tests for the apARCH process with exponent 4, with $r_t^4$ and the realized corrected fourth moment, respectively, as proxies.

The visual comparison of the upper-left and lower-right panels is striking: in the latter, each result is significant, whereas the former shows no significant entries. Hence, using high-frequency data and a suitable loss function results in a highly improved forecast evaluation. Generally, the results are quite similar to the results for the fourth moment based on the GARCH process in Fig. \ref{fig10}.

\begin{figure}
\centering
\includegraphics[width=0.9\textwidth,height=0.4\textheight]
{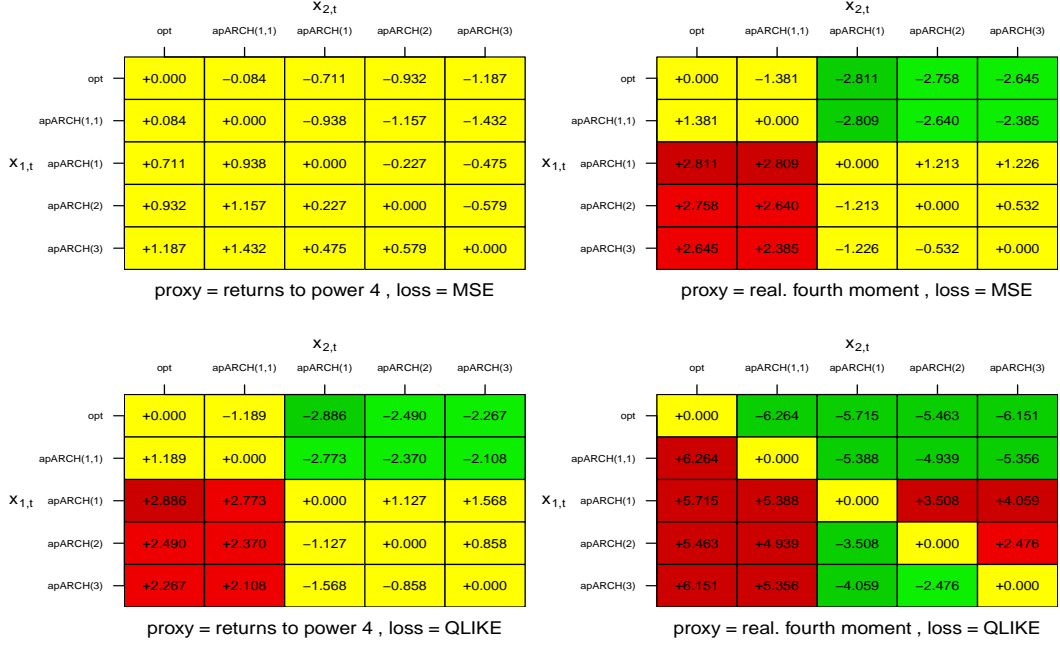}
\caption{Results of DM tests (based on 50 replications), apARCH process with exponent 4, normal distribution, left: returns to the power 4, right: realized corrected 4th moment, m=100, T=1500.}
\label{fig13}
\end{figure}

%
%

\smallskip
To sum up the results of the simulations, it has become obvious that using high-frequency data for the proxies improves the forecast evaluation in each example. In most cases, the effect is substantial. There is also an effect of the choice of the loss function: the power of the DM test improves when using the QLIKE loss compared to the MSE loss function.

\section{Log returns of cryptocurrencies} \label{sec-data}

Many GARCH and GARCH-type models have been used for modelling and predicting the volatility of cryptocurrencies  \citep{paraskevi2017btc,naimy2018btc,chu-2017}, and there is no general agreement which model is the best choice. \cite{katsiampa2017btc, naimy2018btc} and \cite{gyamerah2019btc} advocate the use of the component GARCH (CGARCH), the exponential GARCH (EGARCH) and the threshhold GARCH (TGARCH) model, respectively. \cite{chu-2017} conclude that the standard GARCH (SGARCH), the integrated GARCH (IGARCH) and the Glosten-Jagannathan-Runkle GARCH (GJRGARCH) are preferable depending on the crypto-currency. An overview over this and related literature can be found in \cite{naimy2021crypto}, who favor CGARCH, GJR-GARCH, APARCH, and TGARCH.
\cite{catania2018gas} advocate the use of a score driven model with conditional generalized hyperbolic skew student's t innovations for predicting the conditional volatility.

In this section, we consider log returns of three cryptocurrencies, namely Bitcoin (BTC), already used in subsection \ref{sec-examples}, Ethereum (ETH) and Ripple (XRP).
According to coinmarketcap.com, BTC and ETH are the cryptocurrencies with the highest and second highest  market capitalization; XRP is number seven in the list. BTC is one of the oldest cryptocurrencies, existing since 2008, whereas ETH and XRP were released in 2013 and 2012. All three cryptocurrencies are traded in many cryptocurrency exchanges like Binance, FTX or Bitstamp.
We use hourly observations from May 16, 2018 to October 27, 2021, with sample size 30264, which corresponds to 1261 days. All prices are closing values in U.S. dollars of the Bitstamp Exchange obtained from cryptodatadownload.com. Returns are estimated by taking logarithmic differences.
Figures A.4 and A.5 in the Supplementary Material \citep{holzmann2022proxy} show plots of the log-returns and the autocorrelation functions of log-returns for the three cryptocurrencies.

Similarly as in Section \ref{sec-sims}, we consider a standard GARCH(1,1) model and ARCH models of order 1 and 4. As more sophisticated models, we also employ the EGARCH model of \cite{nelson1991egarch} and the CGARCH model of  \cite{lee1999cgarch}, both with $p=q=1$ and normal innovations. We model the conditional mean by a constant value in all cases. Since the variance is non-elicitable, we aim at predicting the conditional second moment, using either squared returns or the high frequency proxy $\sum_{i=1}^{24}r_{t,i}^2$. One-step-ahead forecasts use a moving time window with length $\lfloor T/3\rfloor =420$, refitted every time step.
Figure \ref{fig:DM-BTC} in subsection \ref{sec-examples} and Figures \ref{fig:DM-ETH} and \ref{fig:DM-XRP} show the results of DM tests for the log returns of Bitcoin, Ethereum and Ripple, respectively.


\begin{figure}
\centering
\includegraphics[width=\textwidth,height=0.4\textheight]{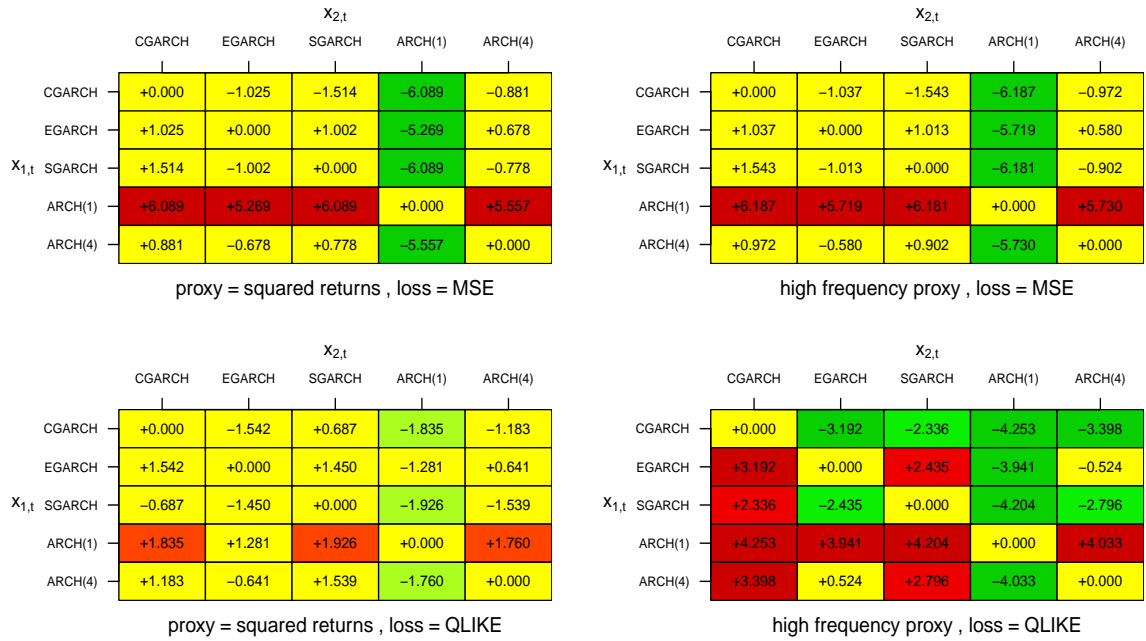}
\caption{Results of DM tests for ETH log returns from May\,16,\,2018 to October\,27,\,2021, left: squared returns, right: high frequency proxy.}
\label{fig:DM-ETH}
\end{figure}

\begin{figure}
\centering
\includegraphics[width=\textwidth,height=0.4\textheight]{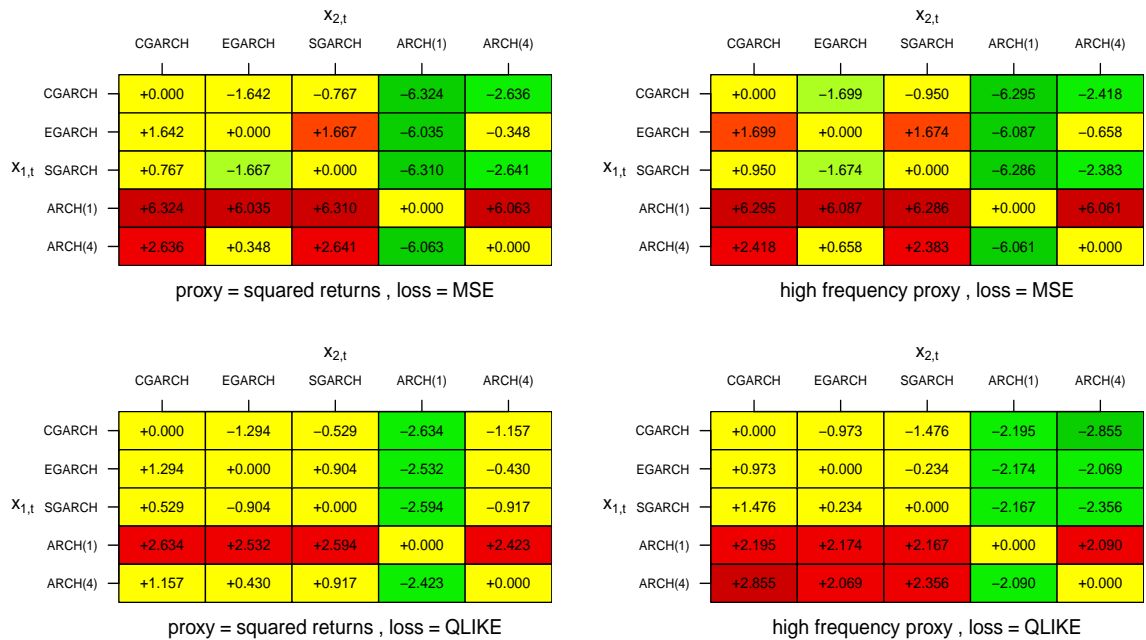}
\caption{Results of DM tests for XRP log returns from May\,16,\,2018 to October\,27,\,2021, left: squared returns, right: high frequency proxy.}
\label{fig:DM-XRP}
\end{figure}

The general observation is that the GARCH-type models outperform the two ARCH models. The GARCH-type processes are comparable, with the CGARCH dominating the EGARCH model for BTC in case of the QLIKE loss and high frequency proxy. For QLIKE loss and high frequency proxy, CGARCH outperforms all other models, followed by SGARCH.

With QLIKE loss, the power of the DM tests is considerably higher by using the high frequency proxy compared to squared returns.
Moreover, for BTC and ETH,  there is an increase in power using the QLIKE loss function compared to MSE; somewhat surprisingly, the reverse holds for XRP.
Hence, apart from the latter observation, the results corroborate the findings of the simulations.

\section{Concluding remarks}

Our perspective, say in the Theorem \ref{th:dommoment}, is that $Y_{t+1}$ is observed at time $t+1$, but that in addition a proxy $\tilde Y_{t+1}$ is also observed or can be computed which is more informative about the functional. Hence, $\tilde Y_{t+1}$ can be used to increase the power of tests for forecast dominance.
The theoretical results are supported by both the simulations and the real data example: The use of high-frequency data for the proxies generally improves the forecast evaluation. This effect overlaps with the impact of the choice
of the loss function, which can also be quite high. Empirically, the power of the DM test improves when using the QLIKE
loss compared to the MSE loss function, although no theoretical results seem to be available in this direction.

Another perspective which is pursued e.g.~in \citet{hoga2022testing} for mean forecasts of US GDP growth, by \citet{li2018asymptotic} for forecasting integrated volatility in high-frequency finance and by \citet{kleen2021measurement} for probabilistic forecasts is that the variable of interest $Y_{t+1}$ is actually not observed but latent, and only proxies of $Y_{t+1}$ with additional measurement error  can be observed. The problem is to quantify the effect of measurement error on forecast evaluation. An interesting issue in the context of probabilistic forecasting would thus be to investigate whether proxies can also be used, as in our setting, to improve forecast evaluation.

\smallskip

When moving beyond moments and ratios of moments and when considering general functionals $T$, one always has the following: If $\los$ is strictly consistent for $T$, and $\tilde Y_{t+1}$ is a proxy for $T$ of $Y_{t+1}$ in the sense that the conditional functionals
\begin{equation}\label{eq:truecondpar}
	 T\big( F_{ Y_{t+1}|\F_t}(\omega, \cdot)\big) = T\big( F_{ \tilde Y_{t+1}|\F_t}(\omega, \cdot)\big)
\end{equation}	
coincide, then for $\F_t$-measurable $Z_t$,
\begin{equation}\label{eq:condproxtrue}
	\E\, \Big[\los\big (T\big( F_{ Y_{t+1}|\F_t}(\omega, \cdot)\big), \tilde Y_{t+1}\big) | \F_t\Big](\omega) \leq \E\, \big[\los(Z_t, \tilde Y_{t+1}) | \F_t\big](\omega) \qquad \text{ for }\Pb-a.e.\ \omega \in \Omega,
\end{equation}
with equality almost surely if and only if $T\big( F_{ Y_{t+1}|\F_t}(\omega, \cdot)\big) = Z_t$ $\Pb$-almost surely.
Indeed, this is simply \eqref{eq:ineqfirst} stated for $\tilde Y_{t+1}$ by observing \eqref{eq:truecondpar}.
\eqref{eq:condproxtrue} implies that if we do not take into account comparing \emph{two} possibly misspecified forecasts, then we can always replace  the variable of interest $Y_{t+1}$ by a proxy $\tilde Y_{t+1}$ which satisfies \eqref{eq:truecondpar}.

However, \citet{hoga2022testing} show in their Proposition 3 that for quantile scores, this cannot be extended to comparing misspecified forecasts: If the difference of conditional loss differences vanishes, then the conditional distributions of  $Y_{t+1}$ and $\tilde Y_{t+1}$ must coincide. It would be of interest to investigate if this negative result is more pervasive and applies to further functionals such as expectiles.

\section*{Acknowledgements}
We would like to thank Andrew Patton and Tilmann Gneiting for pointers to the literature, and in particular for bringing the paper by \citet{hoga2022testing} to our attention. Then, we would like to thank Timo Dimitriadis for providing various general and detailed, helpful comments on the paper.

The authors are grateful for the suggestions and comments of two anonymous reviewers  which helped to improve the paper.

%
%
\bibliographystyle{apalike}

\pagebreak
\section{Supplementary Material}
\renewcommand{\floatpagefraction}{0.9}
\renewcommand{\textfraction}{0.1}

\subsection{Additional simulation results}

In addition to the results in Figures 1 and 2 in Section 6.1
for realized volatilities, we also replaced these by the adjusted intra-daily log range, given by
\begin{align}
	\left(\max_s \log P_s - \min_s \log P_s\right)/(2\sqrt{\log 2}), \quad t-1 < s \leq t,
\end{align}
where $(P_s)$ denotes the price process. This volatility proxy is unbiased under normality assumptions; details can be found in Patton (2011), 
p.~250. The results are displayed in  Figure \ref{fig3}.
All in all, forecast evaluation using the adjusted intra-daily log range is less sharp compared to the realized volatilities with $m=13$, but power is clearly higher than using squared returns as proxies.

\begin{figure}[hb]
	\centering
\includegraphics[width=0.9\textwidth,height=0.4\textheight]
	{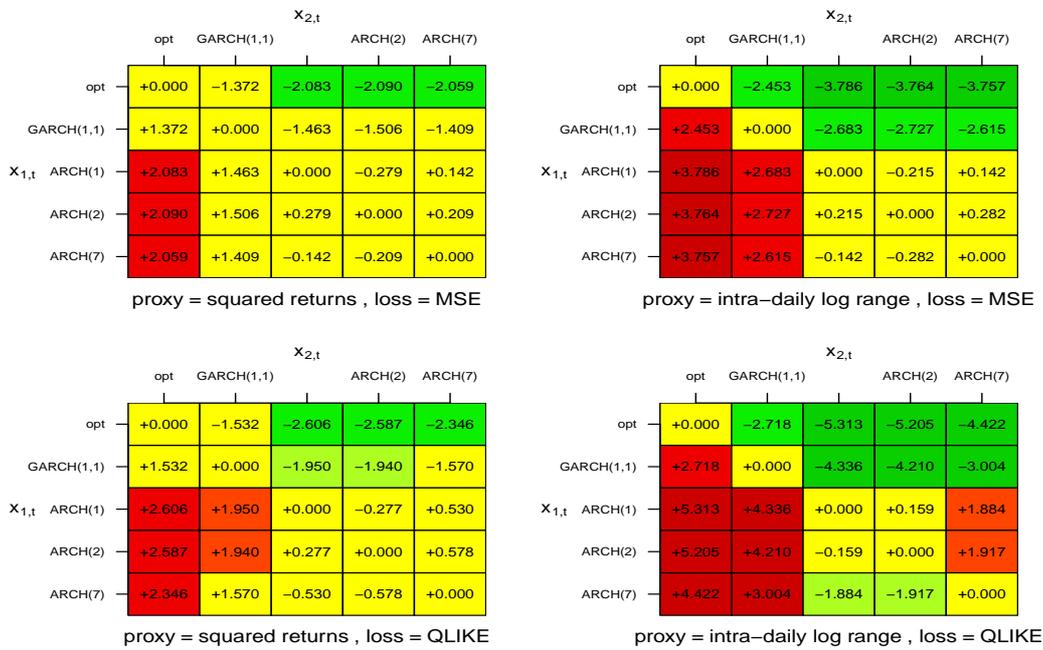}
	\caption{Results of Diebold-Mariano tests, normal distribution, left: squared returns, right: adjusted intra-daily log range, $m=100, T=1500$.}
	\label{fig3}
\end{figure}

\bigskip

The left and right panels of Fig.~\ref{fig12} show the results of the DM tests in the setting of Section 6.2.2, 
using $r_t^4$ and the realized corrected fourth moment as proxies, respectively, when replacing  normal by nig - distributed innovations. The power of the DM test decreases strongly. On the other hand, the entries are somewhat larger as in forecasting the third moment (with $T=1500$). Here, at least a few values are significant on the 0.1-level.

\begin{figure}
	\centering
\includegraphics[width=0.9\textwidth,height=0.4\textheight]
	{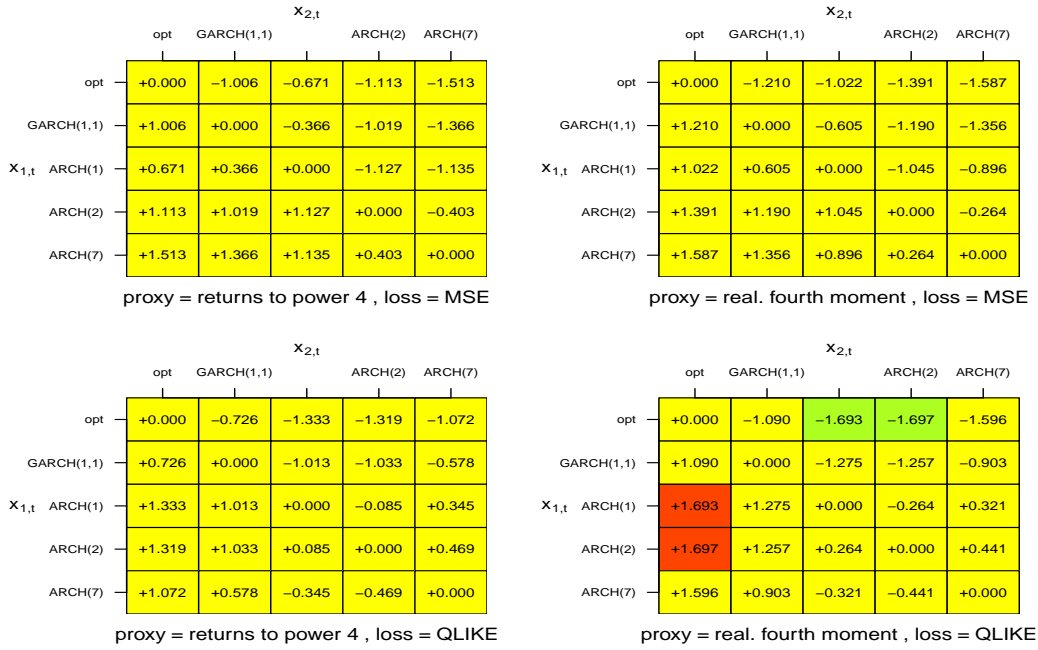}
	\caption{Results of Diebold-Mariano tests, nig-distribution, left: returns to the power 4, right: realized corrected 4th moment, m=13, T=1500.}
	\label{fig12}
\end{figure}

Finally, we consider again volatility forecasts, but now based on the apARCH process with exponent 4 from Section 6.3. 
The results are shown in Figure \ref{fig14}.
We see a slight increase in power compared to Fig. 6; 
similar as for the GARCH process, differentiating between volatility forecasts is easier compared to forecasts of the 4th moment in the case of the apARCH process at hand.

\begin{figure}
\centering
\includegraphics[width=0.9\textwidth,height=0.4\textheight]
{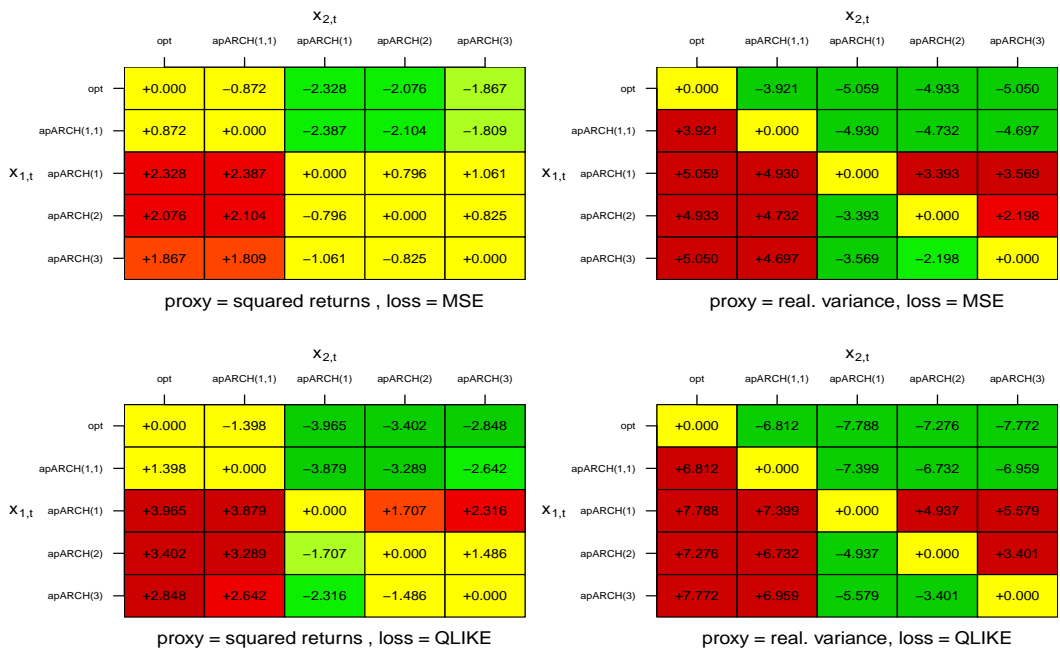}
\caption{Results of Diebold-Mariano tests, apARCH process with exponent 4, normal distribution, left: returns to the power 2, right: realized 2nd moment, m=100, T=1500.}
\label{fig14}
\end{figure}

\subsection{Additional material for the real data analysis}

Figures \ref{fig-tsplot} and  \ref{fig15} show plots of the log-returns and the autocorrelation functions of log-returns for BTC, ETH and XRP. The autocorrelations are quite small for all three cryptocurrencies; the lag-one autocorrelations are negativ.

\begin{figure}
\centering
\includegraphics[width=0.9\textwidth,height=0.29\textheight]
{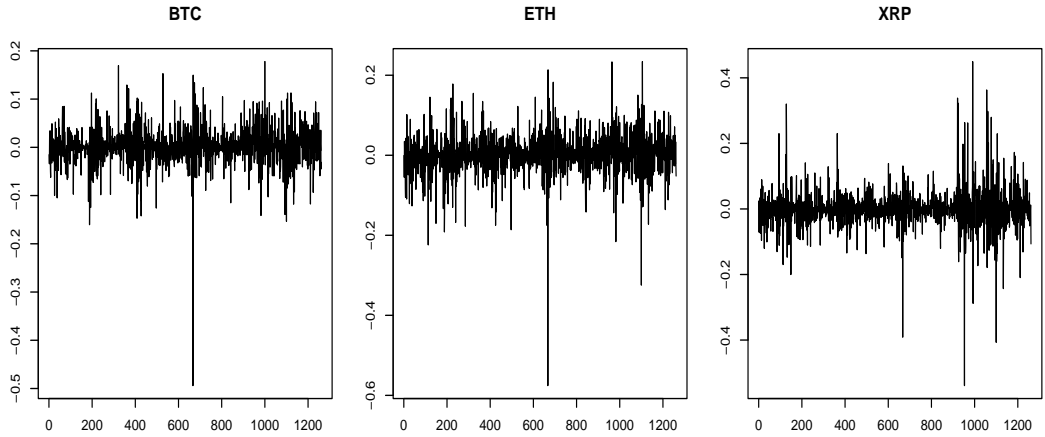}
\caption{Plots of daily cryptocurrency log-returns from May 16, 2018 to October 27, 2021}
\label{fig-tsplot}
\end{figure}

\begin{figure}
\centering
\includegraphics[width=0.9\textwidth,height=0.29\textheight]
{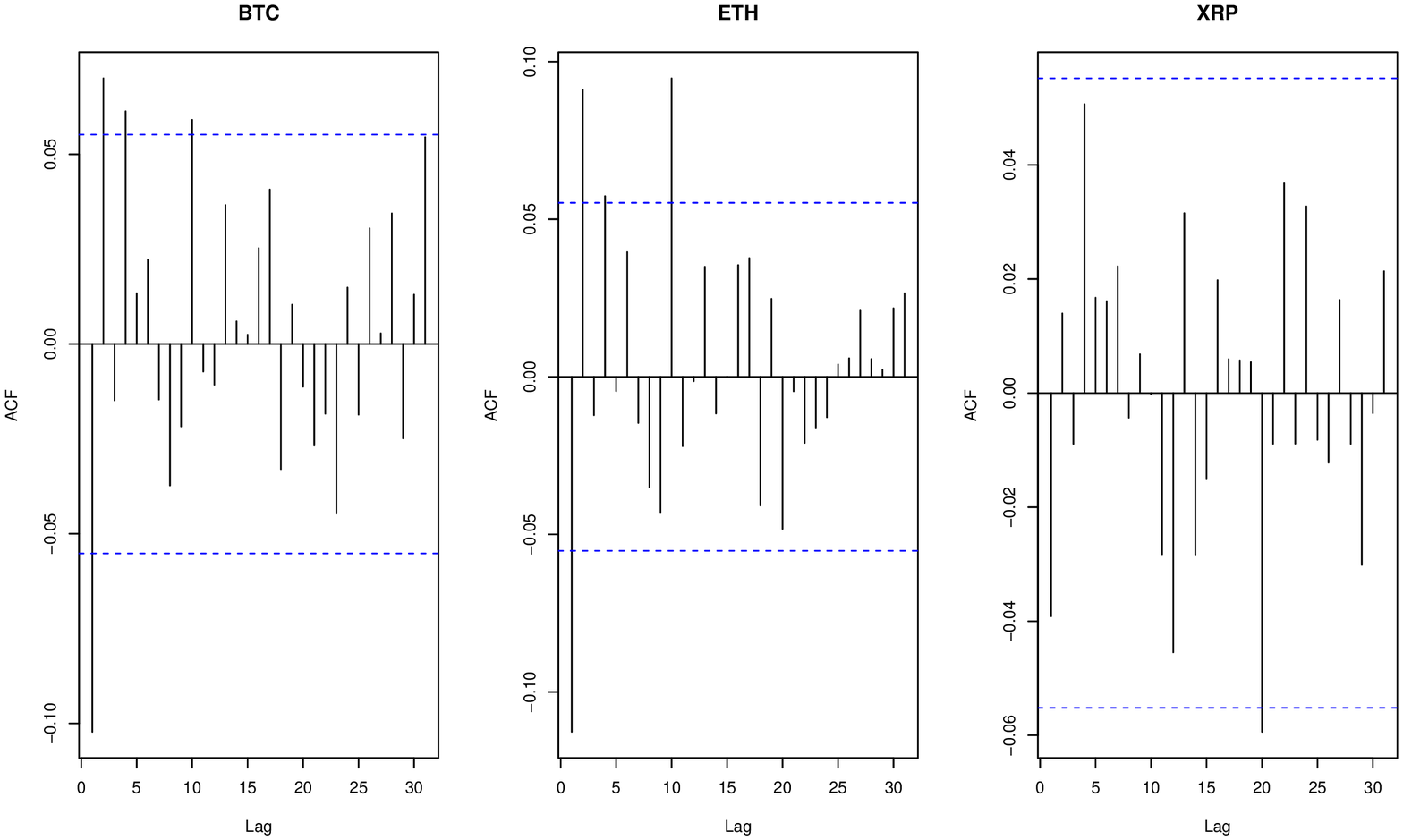}
\caption{Autocorrelation functions of daily cryptocurrency log-returns}
\label{fig15}
\end{figure}

However, as Figure \ref{fig16} reveals, the autocorrelations of the absolute values of the log-returns are much more pronounced, and they are persistent over several days or even weeks.

\begin{figure}
\centering
\includegraphics[width=0.95\textwidth,height=0.29\textheight]
{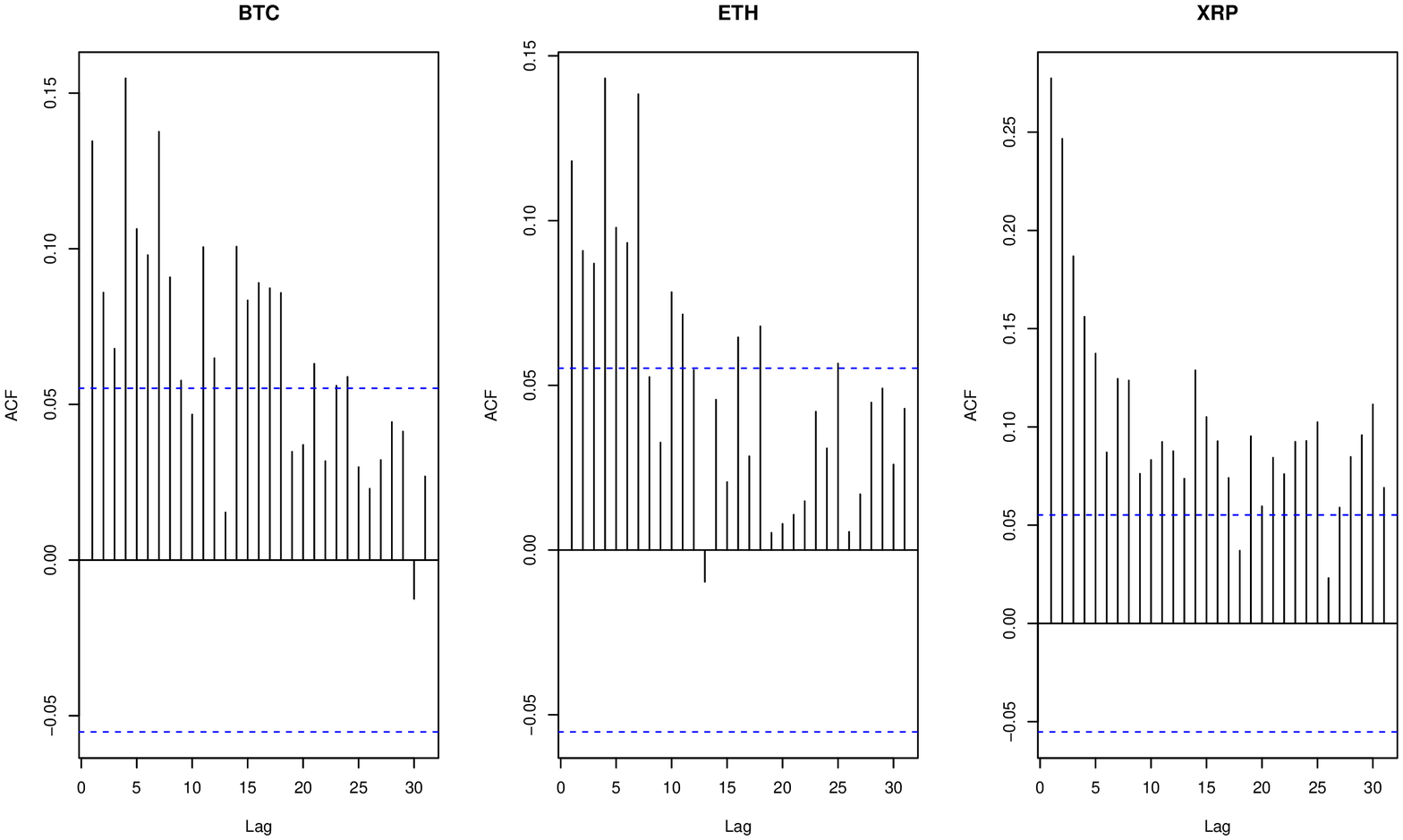}
\caption{Autocorrelation functions of absolute values of daily cryptocurrency log-returns}
\label{fig16}
\end{figure}

\end{document}